\renewcommand{\[}{\begin{equation}}
\renewcommand{\]}{\end{equation}}
\newcommand{\boxd}[1]{\boxed{\phantom{\Biggl(}#1\phantom{\Biggl)}}}
\renewcommand{\eqref}[1]{Eq.\,(\ref{#1})}
\newtheoremstyle{mydef}{}{}{}{}{\bfseries}{.}{ }{\thmname{#1}\thmnumber{ #2}\thmnote{ (#3)}}
\theoremstyle{mydef}
\newtheorem{defin}{Definition}[section]
\newtheorem{remark}[defin]{Remark}
\newtheorem{example}[defin]{Example}
\theoremstyle{plain}
\newtheorem{proposition}[defin]{Proposition}
\newtheorem{theorem}[defin]{Theorem}
\newcommand{\cdef}[1]{\emph{#1}}
\def\d{\mathrm{d}}
\def\e{\textrm e}
\def\i{\mathrm i}
\def\ie{{i.e.}~}
\def\eg{{e.g.}~}
\newcommand{\N}{\mathbb N}
\newcommand{\Z}{\mathbb Z}
\newcommand{\Q}{\mathbb Q}
\newcommand{\In}{\subset}
\newcommand{\cp}{\mathcal{C}}		
\newcommand{\dm}{dim}				
\newcommand{\og}{\gamma}
\newcommand{\tg}{\Gamma}
\newcommand{\ttg}{\tilde{\Gamma}}
\newcommand{\sg}{\Theta}
\newcommand{\ogl}{g}
\newcommand{\tgl}{G}
\newcommand{\tgla}{{G}}
\newcommand{\tglb}{{G'}}
\newcommand{\sgl}{{H}}
\newcommand{\V}{\mathcal{V}}
\renewcommand{\H}{\mathcal{H}}
\newcommand{\Hx}{\mathcal{H}^{\mathrm{ext}}}
\newcommand{\E}{\mathcal{E}}
\newcommand{\Ex}{\mathcal{E}^{\mathrm{ext}}}
\renewcommand{\S}{\mathcal{S}}
\newcommand{\Sx}{\mathcal{S}^{\mathrm{ext}}}
\newcommand{\F}{\mathcal{F}}
\newcommand{\Fx}{\mathcal{F}^{\mathrm{ext}}}
\newcommand{\Fi}{\mathcal{F}^{\mathrm{int}}}
\renewcommand{\v}{\nu}
\newcommand{\h}{\mu}
\newcommand{\vs}{{\sigma_1}}
\newcommand{\nv}{V}
\newcommand{\nei}{E}
\newcommand{\nf}{F}
\newcommand{\nc}{K}
\newcommand{\ei}{\iota}
\newcommand{\es}{{\sigma_2}}
\newcommand{\VS}{\mathcal{S}^v}
\newcommand{\ES}{\mathcal{S}^e}
\newcommand{\vd}{d^v}
\newcommand{\hd}{d^h}
\newcommand{\vg}{g_{v}}
\newcommand{\vgb}{\beta_\mathrm{vg}}
\newcommand{\vgp}{\pi_\mathrm{vg}}
\newcommand{\cc}{\mathcal{K}}
\newcommand{\pset}[1]{\mathbf{2}^{#1}}
\newcommand{\pmset}[1]{\mathcal{P}(#1)}
\newcommand{\jv}{j_\V}
\newcommand{\jh}{j_\H}
\newcommand{\js}{j_\S}
\newcommand{\tsim}{\cong}
\newcommand{\osim}{\cong}
\newcommand{\OG}{\mathbf{G}_{1}}
\newcommand{\Vtypeg}{\mathbf{V}}
\newcommand{\TG}{\mathbf{G}_{2}}
\newcommand{\TGog}{\mathbf{G}_{2\textrm{c}}^\og}
\newcommand{\KG}{\mathbf{K}}
\newcommand{\PG}{\mathbf{P}}
\newcommand{\psd}{\mathbf{P}^{\textrm{s.d.}}}
\newcommand{\sgr}{\In}
\newcommand{\Res}{\mathbf{R}}
\newcommand{\Vtyper}{\mathbf{S}^*}
\newcommand{\res}{\mathrm{res}}
\newcommand{\br}{\partial}			    
\newcommand{\brt}{\widetilde{\partial}}		
\newcommand{\skl}{\mathrm{skl}}
\newcommand{\vgs}{\varsigma}            
\newcommand{\vgt}{\widetilde{\varsigma}\,}            
\newcommand{\ins}[2]{{\mathcal{I}}(#1,#2)}
\newcommand{\aut}{\mathrm{Aut}\,}
\newcommand{\uca}{\mathcal{A}}
\newcommand{\btg}{\mathcal{G}}
\newcommand{\hfd}{\mathcal{H}^{\textrm{f2g}}}
\newcommand{\one}{\mathbbm{1}}
\newcommand{\id}{\mathrm{id}}
\newcommand{\cop}{\Delta}
\newcommand{\cou}{\epsilon}
\newcommand{\conp}{*}
\newcommand{\anti}{S}
\newcommand{\rota}{R}
\newcommand{\w}{\omega}
\newcommand{\sdd}{\omega^{\textrm{sd}}}
\newcommand{\uvd}{d_{\rk}}
\newcommand{\ks}{{2\zeta}}          
\newcommand{\gdeg}{\omega^\textsc{g}}
\newcommand{\std}{D}				
\newcommand{\rk}{r}             	
\newcommand{\gd}{d}                 
\newcommand{\Sia}{S_\textsc{ia}}   		
\newcommand{\vx}{\pmb{x}}
\newcommand{\vp}{\pmb{p}}
\newcommand{\cn}[1]{\lambda_{#1}}
\begin{document}

\begin{flushright}
MaPhy-AvH/2021-02
\end{flushright}

\title
{Renormalization in combinatorially non-local field theories: the Hopf algebra of 2-graphs}

\author{Johannes Th\"urigen}
\affiliation{Mathematisches Institut der Westf\"alischen Wilhelms-Universit\"at M\"unster\\ Einsteinstr. 62, 48149 M\"unster, Germany, EU \\
Institut f\"ur Physik/Institut f\"ur Mathematik der Humboldt-Universit\"at zu Berlin\\
Unter den Linden 6, 10099 Berlin, Germany, EU}
\emailAdd{johannes.thuerigen@uni-muenster.de}

\begin{abstract}
{It is well known that the mathematical structure underlying renormalization in perturbative quantum field theory is based on a Hopf algebra of Feynman diagrams. 
A precondition for this is locality of the field theory. Consequently, one might suspect that non-local field theories such as matrix or tensor field theories cannot benefit from a similar algebraic understanding. 
Here I show that, on the contrary, the renormalization and perturbative diagramatics of a broad class of such field theories is based in the same way on a Hopf algebra. 
These theories are characterized by interaction vertices with graphs as external structure leading to Feynman diagrams which can be summed up under the concept of ``2-graphs".
From the renormalization perspective, such graph-like interactions are as much local as point-like interactions.
They differ in combinatorial details as I exemplify with the central identity 
for the perturbative series of combinatorial correlation functions.
This sets the stage for a systematic study of perturbative renormalization as well as non-perturbative aspects, \eg Dyson-Schwinger equations, for a number of combinatorially non-local field theories with possible applications to quantum gravity, statistical models and more.
}
\end{abstract}

\setcounter{tocdepth}{2}

\maketitle


\section{Locality in combinatorial non-local field theories}

Locality in QFT allows to perform perturbative renormalization order by order.
Subtracting divergences is described mathematically by a Hopf algebra of Feynman diagrams~\cite{Kreimer:1998te,Connes:2000km,Connes:2001hk,Kreimer:2002br}.
For a given diagram, the coproduct separates divergent subdiagrams.
Then the renormalization operation subtracting the divergences of this subdiagram is based on the Hopf algebra's antipode.
Combinatorially, one needs a closed set of diagrams in which it is possible to ``separate" divergent diagrams and counter them by single vertices with the same external structure.
It is well known that this works in many cases for local, \ie point-like, interactions \cite{Kreimer:1998te,Connes:2000km,Connes:2001hk, Kreimer:2002br, Kreimer:2006gg, vanSuijlekom:2007jv, Kreimer:2008jm, Borinsky:2018,Yeats:2008,Yeats:2017}.
But there are also examples of renormalizable field theories with certain ``non-local'' interactions for which a Connes-Kreimer type Hopf algebra can be found~\cite{Tanasa:2007xa,Tanasa:2013kh,Raasakka:2013wa,Avohou:2015co}.
More specifically, these are combinatorially non-local field theories such as non-commutative quantum field theory and matrix field theory \cite{Kontsevich:1992en,Grosse:2004bm, Grosse:2005ec,Wulkenhaar:2019,Hock:2020tg}, and its generalization to field theories of higher rank tensor fields \cite{BenGeloun:2013fw, BenGeloun:2017un, BenGeloun:2014gp}, including group field theory~\cite{Carrozza:2013uq,DePietri:2000dt,Oriti:2003uw,Freidel:2005jy}.
It is thus a natural question whether this is just a coincidence in specific examples or a more general feature, and whether the (Hopf algebra) mechanisms are different or the same.

Standard quantum field theories on a $\std$-dimensional space are local in the sense that they have point-like interactions, for example  $\phi(\vx)^n$ for a scalar field $\phi$ on coordinates $x_1,...,x_\std$.
This relates to energy-momentum conservation at interaction vertices via Fourier transformation
\[\label{eq:localinteraction}
\Sia[\phi] = 
\int\d\vx \,\cn{n}\phi(\vx)^n =
\cn{n}\int\d\vx \prod_{i=1}^n \int \d\vp_i \,\tilde{\phi}(\vp_i) \e^{\i \vp_i\cdot\vx} =
\cn{n}\int\prod_{i=1}^n \d\vp_i\,  \delta\bigg(\sum_{i=1}^n \vp_i\bigg) \prod_{i=1}^n \tilde{\phi}(\vp_i)
\]
where $\delta$ is the 
Dirac distribution. 
It constrains interactions to have incoming and outgoing energy-momentum being equal.
In particular, correlation functions and effective vertices are characterized by a given external constraint structure, as well as each Feynman diagram in their perturbative expansion.
For perturbatively renormalizable quantum field theories, this external structure of interaction vertices allows to subtract the divergent part of an amplitude identifying it in subgraphs.
Remarkably, the structure of this renormalization procedure, independent of a specific renormalization scheme, is captured by the Hopf algebra of Feynman diagrams~\cite{Kreimer:1998te, Connes:2000km,Connes:2001hk,Kreimer:2002br}. 
Thereby, the external structure of vertices induced by locality of the interactions plays a crucial role.

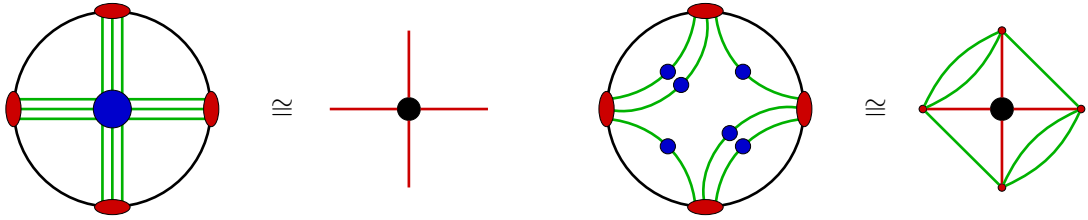
\begin{figure}
    \centering
    \begin{tikzpicture}[scale=1.3]
    \node [vs]	(11)	at (1,.1)	{};
    \node [vs]	(12)	at (1,0)	{};
    \node [vs]	(13)	at (1,-.1)	{};    
    \node [vs]	(21)	at (.1,1)	{};
    \node [vs]	(22)	at (0,1)	{};
    \node [vs]	(23)	at (-.1,1)	{};    
    \node [vs]	(31)	at (-1,.1)	{};
    \node [vs]	(32)	at (-1,0)	{};
    \node [vs]	(33)	at (-1,-.1)	{};    
    \node [vs]	(41)	at (.1,-1)	{};
    \node [vs]	(42)	at (0,-1)	{};
    \node [vs]	(43)	at (-.1,-1)	{};
    \node [vc, fit=(12) (32)] {};
    \path
    \foreach \i/\j in {11/31,12/32,13/33,21/41,22/42,23/43}{
    (\i) edge [ev] (\j)
    };
    \node [vh, minimum size=5mm]    {};
    \node [cs,fit=(11)(13)] {};
    \node [cs,fit=(21)(23)] {};
    \node [cs,fit=(31)(33)] {};
    \node [cs,fit=(41)(43)] {};
    
    \node [c, label=0:$\cong$] at (1.5,0) {};
    
    \begin{scope}[xshift=3cm, scale=.8]
    \node [vs]	(1)	at (1,0)	{};
    \node [vs]	(2)	at (0,1)	{};
    \node [vs]	(3)	at (-1,0)	{};
    \node [vs]	(4)	at (0,-1)	{};
    \node [v, fill=black, minimum size =3mm]   (0) at (0,0)    {};
    \path
    \foreach \i in {1,2,3,4}{
    (\i) edge [e] (0)
    };
    \end{scope}
    
    \begin{scope}[xshift=6cm]
    \node [vs]	(12)	at (1,.1)	{};
    \node [vs]	(13)	at (1,0)	{};
    \node [vs]	(14)	at (1,-.1)	{};    
    \node [vs]	(21)	at (.1,1)	{};
    \node [vs]	(24)	at (0,1)	{};
    \node [vs]	(23)	at (-.1,1)	{};    
    \node [vs]	(32)	at (-1,.1)	{};
    \node [vs]	(31)	at (-1,0)	{};
    \node [vs]	(34)	at (-1,-.1)	{};    
    \node [vs]	(41)	at (.1,-1)	{};
    \node [vs]	(42)	at (0,-1)	{};
    \node [vs]	(43)	at (-.1,-1)	{};
    \node [vc, fit=(13) (31)] {};
    \path
    \foreach \i/\j in {14/41,21/12,32/23,43/34}{
    (\i) edge [ev,bend right=40] node [vh] {} (\j)
    }
    \foreach \i/\j in {13/42, 31/24}{
    (\i) edge [ev,bend right=60] node [vh] {} (\j)
    };
    \node [cs,fit=(12) (14)] {};
    \node [cs,fit=(21) (23)] {};
    \node [cs,fit=(32) (34)] {};
    \node [cs,fit=(41) (43)] {};
    
    \node [c, label=0:$\cong$] at (1.5,0) {};
    
    \begin{scope}[xshift=3cm, scale=.8]
    \node [vb]	(1)	at (1,0)	{};
    \node [vb]	(2)	at (0,1)	{};
    \node [vb]	(3)	at (-1,0)	{};
    \node [vb]	(4)	at (0,-1)	{};
    \node [v, fill=black, minimum size =3mm]   (0) at (0,0)    {};
    \path
    (1) edge [ev] (2)
    (3) edge [ev] (4)
    \foreach \i in {1,2,3,4}{
    (\i) edge [e] (0)
    }
    \foreach \i/\j in {1/4,4/1,2/3,3/2}{
    (\i) edge [ev,bend right=20] node [c] {} (\j)
    };
    \end{scope}
    \end{scope}
    \end{tikzpicture}
    \caption{
    Comparison of the combinatorial structure of an order-$n$ interaction vertex in a combinatorially local theory (left) and non-local theory (right).
    In both cases, there are $n$ fields (red ellipses, here $n=4$) which depend on a number of arguments (green lines).
    In the local case, \eqref{eq:localinteraction}, due to Lorentz (or some other ``rotational'') symmetry these arguments form a vector which is constrained by a single delta distribution (blue circle)
    such that all of this structure is equivalently captured by a vertex (black) with $n$ half edges (red lines).
    On the other hand, in the non-local case, 
    \eqref{eq:nonlocalinteraction})
    the arguments 
    are convoluted pairwaise (blue dots);
    to capture this structure it is necessary to add a second set of edges (green lines) leading to the notion of a 2-graph.
    }
    \label{fig:nonlocalinteractions}
\end{figure}

The aim of this work is to show that also a broad class of ``non-local'' field theories is characterized by the same Hopf-algebraic structure of perturbative renormalization.
Thus, it is not locality in the sense of point-like interactions but an appropriate class of external structures which is crucial for perturbative renormalization.
The field theories under consideration are characterized by an external constraint structure which pairs single entries of ``momenta''.
For example, let $\phi$ again be a scalar field, a function of $\rk$ arguments $\vp=(p^1,...,p^\rk)$ where each argument $p^a$ is in a $\gd$-dimensional manifold.
Then, a generic interaction of order~$n$, with $n\cdot\rk$ even, has the form
\[\label{eq:nonlocalinteraction}
\Sia[\phi] = \cn{n }\int\prod_{i=1}^n \d\vp_i\,  \prod_{(ia,jb)}\delta(p_i^a-p_j^b) \prod_{i=1}^n \tilde{\phi}(\vp_i)
\]
where the product over pairs $(ia,jb)$ means that for each argument $p_i^a$ there is a convolution $\delta(p_i^a-p_j^b)$ with exactly one other argument $p_j^b$. 
As a consequence the diagrammatic representation of interactions is not just a vertex in a graph but has to capture this pairing of arguments (see Fig.~\ref{fig:nonlocalinteractions}). 
In effect, each interaction has the combinatorial structure of a graph itself, \ie the necessary class of external structures is graphs.

Consequently, the Feynman diagrams 
of a combinatorially non-local field theory are certain gluings of vertex graphs, or equivalently, they are graphs with the additional structure of ``strands'' at each edge.
Following a strand through a cycle (loop) of the graph gives a face.
Thus, such ``strand graphs'' are actually two-dimensional objects.
Indeed, they are two-dimensional combinatorial complexes in a specific sense (Prop. \ref{prop:complex}).
This is well known in the case of fields with $\rk=2$ arguments, that is matrices; these generate combinatorial maps, also called ribbon or fat graphs, which are dual to $n$-angulations of surfaces \cite{DiFrancesco:1995ih}.
With more arguments $\rk>2$, one can use additional structure to extend this duality to $n$-angulations of $\rk$-dimensional (pseudo) manifolds \cite{Gurau:2011dw,Gurau:2010iu,Gurau:2016}.
Here I want to consider such diagrams in full generality allowing vertices of arbitrary order $n$ with fields~$\phi_i$, $i=1,2,...,n$, with an arbitrary number of arguments $\rk_i$ (similar to \cite{Oriti:2015kv}). 
To emphasize that such Feynman diagrams are just a generalization of standard Feynman graphs adding a second layer I will call them \emph{2-graphs} and any field theory with such combinatorics a \emph{combinatorially non-local field theory} (cNLFT).

In this work I show that the Hopf-algebraic structure underlying renormalization in cNLFT is very general and independent of any specific theory,
along a similar logic as for local field theory \cite{Borinsky:2018}.
Renormalizablity of various cNLFTs is known as for example  Grosse/Wulkenhaar's non-commutative field theory \cite{Grosse:2004bm, Grosse:2005ec} related to Kontsevich's matrix model \cite{Kontsevich:1992en,Grosse:2018dc}, tensor-field models \cite{BenGeloun:2013fw, BenGeloun:2014gp} and group field theories \cite{Carrozza:2013uq}.
In any such case there is a set of superficially divergent 2-graphs and the Hopf algebra encodes the procedure of identifying (coproduct) and subtracting (antipode) these divergences in a given Feynman 2-graph via subgraph contraction. 
I will generalize the contraction operation from graphs to 2-graphs.
This gives rise to a general Hopf algebra of 2-graphs (Thm.\,\ref{th:hopfalgebra}).
One can then derive the renormalization Hopf algebra of any specific renormalizable theory 
(such as the known ones for the non-commutative field theory \cite{Tanasa:2007xa,Tanasa:2013kh} and some tensor-field models \cite{Raasakka:2013wa,Avohou:2015co})
as a subalgebra of the general Hopf algebra.

With the appropriate concept of 2-graphs as well as contraction and insertion operations at hand, the general algebraic structure of renormalization in cNLFT turns out to be exactly the same as for local field theory, but the algebras differ in combinatorial details. 
In particular, 2-graphs have more structure, and thus less symmetry, than usual graphs.
I show this for the central identity (Thm.\,\ref{th:centralidentity}), \ie the action of the coproduct on an infinite series over 2-graphs which captures the combinatorics of the perturbative expansion of Green's functions; this formula differs from the one of local field theory in that there occurs the symmetry factor of the boundary graph of each interaction vertex, not just the factorial~$n!$.
These concrete formula set the stage for explicit (BPHZ) renormalization \cite{Thurigen:2021wn} and the investigation of Dyson-Schwinger equations in cNLFT which we will report on elsewhere. 

The structure of the paper is the following: 
The concept of 2-graphs is introduced in Sec.~\ref{sec:combinatorics}, followed by the definition of 2-graph contraction in Sec.~\ref{sec:contraction} with a discussion on relevant subtleties concerning the connectedness of the 2-graph's boundary and topology.
In Sec.~\ref{sec:coalgebra} I define the co-algebra and prove the central identity.
Finally, Sec.~\ref{sec:hopfalgebra} provides the general Hopf algebra and explains how to obtain the Connes-Kreimer Hopf algebra for a specific cNLFT 
as the subalgebra generated by the theory's set of divergent 2-graphs.
I close with two examples, the Hopf algebra for the Grosse-Wulkenhaar model and for tensorial field theories.

\newpage

\section{Combinatorial basis: 2-graphs}\label{sec:combinatorics}

For perturbative field theory it is convenient to define graphs in terms of half-edges associated to vertices which are then pairwise combined into edges \cite{Yeats:2008,Yeats:2017,Borinsky:2018}. 
This captures nicely the appearance of self-loops, multi-edges and external legs in Feynman diagrams.

\begin{defin}[graph]\label{1grapha}
A \emph{1-graph}, or simply \emph{graph}, is a tuple $\ogl=(\V,\H,\v,\ei)$ with
\begin{enumerate}[label=(\arabic*),align=left,leftmargin=*, noitemsep]
\item a set of \emph{vertices} $\V$,
\item a set of \emph{half-edges} $\H$,
\item an adjacency map $\v : \H \to \V$ associating half-edges to vertices,
\item[(4a)] an involution on $\H$, that is $\ei : \H \to \H$ such that $\ei\circ\ei = \id$. The resulting pairs of half-edges are called \emph{edges}.
\end{enumerate}
The involution may have fixed points, half-edges paired to themselves. These are understood as \cdef{external edges} (or \cdef{legs}). 
\end{defin}
Alternatively, one can define a graph in terms of an explicit set of edges:
\begin{defin}[graph with edge set]\label{def:1graphb}
A 1-graph is a tuple $(\V,\H,\v,\E)$ with the above properties (1) -- (3) and
\begin{enumerate}[label=(\arabic*),align=left,leftmargin=*]
\item[(4b)] a set of disjoint, two-element subsets of $\H$,
(that is $\E\In\pset{\H}$ such that
$e_{1}\cap e_{2}=\emptyset$ for all $e_{1},e_{2}\in\E$ and $|e|=2$ for all $e\in\E$). 
\end{enumerate}
\end{defin}

The two definitions in terms of either property (4a) or (4b) are equivalent \cite{Borinsky:2018} since an involution (4a) partitions $\H$ into sets of either one or two elements where the latter are in one-to-one correspondence to edge sets (4b).

The simplest way to extend from such notion of graphs to the general class of Feynman diagrams of cNLFT is to introduce another pairing of half-edges, only this time between those at a vertex, to capture the strands.
However, it can happen that also the stranding at a vertex between different edges is multiple, and there can also be self-loops at a single edge.
Thus it is not sufficient to describe the stranding by an involution but it is necessary to explicitly introduce a second layer of (dimension-two) half-edges for the strands:

\begin{defin}[2-graph]\label{def:2grapha}
A \cdef{2-graph} is a tuple $\tgl=(\V,\H,\v,\ei;\S,\h,\vs,\es)$ with the above properties (1) -- (3), (4a) and
\begin{enumerate}[label=(\arabic*),align=left,leftmargin=*,start=5]
\item a set of strand sections $\S$
\item an adjacency map $\h : \S \to \H$ associating them to half-edges,
\item[(7a)] a fixed-point free involution $\vs : \S \to \S$ pairing strand sections at a given vertex, that is
for every $s\in\S$: $\v\circ\h\circ\vs(s) = \v\circ\h(s) $,
\item[(8a)] an involution $\es : \S \to \S$ describing the pairing of strands along edges, that is
\begin{itemize}[noitemsep]
\item $\ei\circ\mu(s) = \mu\circ\es(s)$ for all strand sections $s\in\S$ 
\item every $s\in\S$ is a fixed point of $\es$ iff $\mu(s)$ is a fixed point of $\ei$.
\end{itemize}
\end{enumerate}
\end{defin}

Note that the fulfillment of properties (4a) for $\ei$ and  (8a) for $\es$ depend on each other: There can only be an edge between two half-edges when both have the same number of adjacent strand sections. And if there is an edge, than the adjacent strands have to be paired along that edge.
In that sense $\ei$ and $\es$ together define  \cdef{stranded edges}.
One could collapse the two defining properties into one (in particular $\ei$ is eventually redundant), but since the involutions act on different objects (half-edges and strands) and both are needed for some purpose, it is clearer to separate them.

\begin{defin}[corollae, vertex graph]
It is common to refer to the preimage $\v^{-1}(v)$ of a graph's vertex $v$ as \cdef{corolla}. Its cardinality gives the \cdef{degree} of the vertex $\vd = |\v^{-1}(v)|$.
In a 2-graph $G$, also half edges have a corolla, that is $\h^{-1}(h)$ for a half-edge $h\in\H_G$, as well as a degree $\hd  = |\h^{-1}(h)|$.
Thus, the full \cdef{2-corolla} of a vertex is $(\v\circ\h)^{-1}(v)$ for $v\in\V_G$.

For the entire structure at the vertex $v\in\V$ it is necessary to include the pairing of the strand sections in the 2-corolla. I call this the \cdef{vertex graph}
\[\label{eq:vertexgraph}
\vg=\left( \V_{v},\H_{v},\v_{v},\ei_{v}\right)
 :=\left( \v^{-1}(v), (\v\circ\h)^{-1}(v),\h |_{\H_{v}},\vs |_{\H_{v}} \right)
\]
in which half-edges become vertices and strands become edges.
Note that vertex graphs are not necessarily connected but can have several components.
Such a disconnected vertex is also called a \cdef{multi-trace} vertex.
\end{defin}

\begin{remark}[vertex-graph representation]\label{rem:vertexgraphs}
Since the strand sections $\S$ are paired at individual vertices $v\in\V$ (property (7a)), a 2-graph partitions into its vertex graphs. 
That is, there is a bijective map from 2-graphs to a class of graphs equipped with an additional edge structure%
\[\label{eq:vertexgraphbijection}
\vgb :  (\V,\H,\v,\ei;\S,\h,\vs,\es) \mapsto \big(\{\vg\}_{v\in\V},\ei,\es \big) .
\]
The 2-graph can thus be understood as a gluing of vertex graphs along their half-edge corollae.
I will call this the \cdef{vertex-graph representation} of a 2-graph. 
This is a common way physically relevant classes of 2-graphs are defined in the literature. 
In particular they arise in invariant tensor models, group field theory and spin-foam models where the vertex graphs have been given various colourful names, \eg  ``bubbles''  connected by 0-edges \cite{Bonzom:2012bg}, 
``atoms'' bonded along ``patches'' \cite{Oriti:2015kv}, or ``squid graphs'' glued along their ``squids''   \cite{Kaminski:2010ba}.

It has to be emphasized that it is essential for the bijectivity of the map that its image is defined in terms of 
the set $\{\vg\}_{v\in\V}$ of vertex graphs and not just using
the disjoint union $\bigsqcup_{v\in\V} \vg$.
A single vertex graph $g=\vg$ can have several connected components, $g=\bigsqcup_j g_j$.
Thus, in the disjoint union $\bigsqcup_{v\in\V} \vg$ the information which connected components belong to one vertex in the 2-graph is lost. 
The map to a vertex-graph representation using the disjoint union,
\[\label{eq:vertexgraphprojection}
\vgp :  (\V,\H,\v,\ei;\S,\h,\vs,\es) \mapsto \big(\bigsqcup_{v\in\V} \vg,\ei,\es \big) .
\]
is therefore a projection. 
For the algebraic structures defined below based on contraction and insertion operations it will be crucial that $\vgb$ is bijective and conserves the vertex-belonging information. 
The underlying physical reason is that it is the coupling constants associated to vertices which have to be renormalized.
\end{remark}

\begin{figure}
    \centering
    \begin{tikzpicture}[scale=.7]
    \node [c, label=0:$\cong$] at (-4.2,0) {};
    \node [c, label=0:$\cong$] at (4.2,0) {};
    \begin{scope}[xshift=-7cm]
    \node [v, label=30:1]	(1)	at (-1.5,0) {};
    \node [v, label=150:2, label=-90:3, label=90:4]	(234)	at (0,0) {};
    \node [v, label=-90:5, label=90:6, label=30:7]	(789)at (1.5,0) {};
    \node [c]	(9)	at (2.25,0) {};
    \path
    (1) edge [e] (234)
    (234) edge [e, bend right=80] (789)
    (234) edge [e, bend left=80] (789)
    (789) edge [e] (9);
    \end{scope}
    
    \node [vs,label=80:1]	(1a)	at (-2,.05)	{};
    \node [vs]				(1b)	at (-2,-.05)	{};
    \node [c]				(1c)	at (-3,0)	{};
    \node [c]				(1d)	at (-2.5,0)	{};	
    \node [vc, fit=(1a) (1c)] 	{};
    \node [cs,fit=(1a) (1b)] 	{};    
    \node [vs, label=30:2]	(24)	at (-1,.05)	{};
    \node [vs]				(23)	at (-1,-.05)	{};
    \node [c]				(2c)	at (1,-.05)	{};
    \begin{scope}[rotate=120]  
    \node [vs, label=below:3]	(32)	at (-1,.05)	{};
    \node [vs]				(34)	at (-1,-.05)	{};
    \end{scope}
    \begin{scope}[rotate=-120]  
    \node [vs]				(43)	at (-1,.05)	{};
    \node [vs,label=above:4]	(42)	at (-1,-.05)	{};
    \end{scope}
    \node [vc, fit=(23) (2c)] {};
    \node [cs,fit=(24) (23)] {};
    \node [cs,fit=(32) (34)] {};
    \node [cs,fit=(42) (43)] {};
    \path
    \foreach \i/\j in {24/42,32/23,43/34}{
    (\i) edge [ev,bend right=30]  (\j)
    };    
    \begin{scope}[xshift=2.5cm]
    \node [vs, label=0:7]   (89)	at (1,.05)	{};
    \node [vs]				(87)	at (1,-.05)	{};
    \node [c]				(8c)	at (-1,-.05){};
    \begin{scope}[rotate=120]  
    \node [vs, label=80:6]  (97)	at (1,.05)	{};
    \node [vs]				(98)	at (1,-.05)	{};
    \end{scope}
    \begin{scope}[rotate=-120]  
    \node [vs]				(78)	at (1,.05)	{};
    \node [vs,label=-80:5]  (79)	at (1,-.05)	{};
    \end{scope}
    \node [vc, fit=(89) (8c)] {};
    \node [cs,fit=(98) (97)] {};
    \node [cs,fit=(79) (78)] {};
    \path
    \foreach \i/\j in {89/98,97/79,78/87}{
    (\i) edge [ev,bend left=30]  (\j)
    };
    \node [cs,fit=(89) (87)] {};
    \end{scope}    
    \path
    (1a) edge [ev, bend right=70] (1d)
    (1b) edge [ev, bend left=75]	 (1d)
    (1a) edge [ev] (24)
    (1b) edge [ev] (23)
    (42) edge [ev, bend left=60] (98)
    (43) edge [ev, bend left=60] (97)
    (34) edge [ev, bend right=60] (79)
    (32) edge [ev, bend right=60] (78);
    
    \begin{scope}[xshift=8cm]
    \node [v]	(v1)	at (-2.5,0) {};
    \node [c]	(c1)	at (-3,0) 	{};
    \node [v]	(234)	at (0,0) {};
    \node [v]	(789)at (2.6,0) {};
    \node [vb, label=90:1]	(1)	at (-2,0)	{};
    \node [vb, label=90:2]	(2)	at (-1,0)	{};
    \node [vb, label=-90:3]	(3)	at (.6,-1)	{};
    \node [vb, label=90:4]	(4)	at (.6,1)	{};
    \node [vb, label=90:7]	(9)	at (3.6,0)	{};
    \node [vb, label=90:6]	(8)	at (2,1)	{};
    \node [vb, label=-90:5]	(7)	at (2,-1)	{};
    \path
    (v1) edge [e] (1)
    \foreach \i in {2,3,4}{
    (\i) edge [e] (234)
    }
    \foreach \i in {7,8,9}{
    (\i) edge [e] (789)
    }
    (1) edge [ev, bend right=90] (c1)
    (1) edge [ev, bend left=90] (c1)
    \foreach \i/\j in {2/3,3/4,4/2,7/8,8/9,9/7}{
    (\i) edge [ev] (\j)
    }
    \foreach \i/\j in {1/2,3/7,4/8}{
    (\i) edge [es] (\j)
    };
    \end{scope}        
    \end{tikzpicture}
    \caption{A combinatorial map $(\H,\sigma,\ei)=(\{1,2,3,4,5,6,7\},(1)(234)(576),(12)(35)(46))$ drawn on the plane with counter-clockwise orientation of the vertices (left) and the corresponding \mbox{2-graph} $(\V,\H,\v,\ei;\S,\h,\vs,\es)$ in the stranded representation (ribbon graph, middle) and vertex-graph representation (right; dashed lines represent the pairing $\es$ into stranded edges). 
    For each half-edge $j\in\H$ there are strand sections $s_{ji},s_{jk}\in\S$ according to the cycle $(...ijk...)$ of the permutation $\sigma$, for example here $s_{32}$ and $s_{34}$ adjacent to $h_3$, and $\vs$ pairs each $s_{ij}$ with $s_{ji}$.
    As example of the special case $n<3$, the univalent vertex $v_1$ defined by the cycle $(1)$ in $\sigma$ has two strand sections $s_{11}$ and $s'_{11}$ at the half-edge $h_1$ which are paired with each other by $\vs$.}
    \label{fig:combmap}
\end{figure}
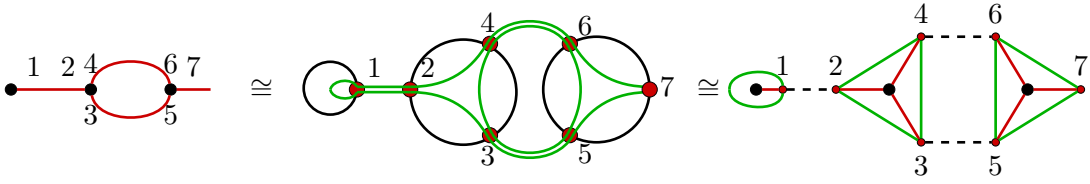

The framework of 2-graphs is very general and covers many examples of field theory.
Even standard local field theory is covered if one considers trivial 2-graphs with $\S=\emptyset$.
The paradigmatic example in mind, though, are theories of tensor fields, in particular 
matrix field theories \cite{Wulkenhaar:2019,Hock:2020tg} (related to noncommutative field theory) with combinatorial maps as diagrams and theories with tensorial interactions of arbitrary rank whose diagrams are coloured graphs \cite{BenGeloun:2014gp,Carrozza:2013uq}:

\begin{example}[combinatorial maps]\label{ex:matrixdiagrams}
Combinatorial maps (also called ``ribbon graphs'' or ``fat graphs'' in physics) are a special example of 2-graphs with all edges of degree $\hd=2$:
A (finite) \emph{combinatorial map} is a triple $(\H,\sigma,\ei)$ with the above properties (1) and (4a)%
\footnote{
In contrast to combinatorics literature (e.g.\,\cite{Eynard:2016ed}), I include boundaries as fixed-points of $\ei$ (external edges) and not in terms of marked edges.
In this way,  diagrams which are disconnected upon removing a disconnected boundary are in fact disconnected maps.
}
and a permutation $\sigma:\H\rightarrow\H$ whose cycles are called vertices.

In this case it is not necessary to define the strand sections explicitly; like the vertices, they are already encoded in the cyclic structure around vertices given by $\sigma$. 
Thus a $\sigma$-cycle $v = (h_{1},h_{2},...,h_{n})$, $n\ge3$, defines the full vertex graph
\[
g_v = (\{h_{1},h_{2},...,h_{n}\},\{s_{1n},s_{12},s_{21},s_{23},...s_{n\,n-1},s_{n1}\},\v_{v},\ei_{v})
\]
where $\v_{v}$ encodes the adjacency of each strand $s_{ij}$ to the half-edge $h_{i}$  and $\ei_{v}$ pairs $s_{ij}$ with $s_{ji}$ iff $\sigma(h_{i})=h_{j}$. 
Also univalent and bivalent vertices (one-cycles and two-cycles) are captured in this way, one only has to distinguish $s_{11}$ and $s'_{11}$ respectively $s_{12},s_{21}$ and $s'_{12}, s'_{21}$.
See Fig. \ref{fig:combmap} for an example.

In the same way, the edge involution $\ei:\H\to\H$ defines already a pairing $\es$ of the edges' strands in terms of the orientation given by $\sigma$. 
Then, the explicit 2-graph is given by the vertex graphs and the bijection $\vgb$,  \eqref{eq:vertexgraphbijection}.
In particular, if all vertices have degree $n\ge3$, faces are simply the cycles of $\sigma\circ\ei$ \cite{Eynard:2016ed}.
\end{example}

\begin{example}[coloured graphs]\label{ex:colouredgraphs}
Another standard example are the Feynman diagrams of rank-$\rk$ tensor theories with invariant interactions \cite{Bonzom:2012bg,Gurau:2016}.
These are $(\rk+1)$-regular, edge-coloured graphs, or \emph{$(\rk+1)$-coloured graphs} for short, that is graphs for which each vertex is adjacent to exactly one edge decorated with ``colour'' $c=0,1,...,\rk$ each.
The interpretation as a Feynman diagram, and thus the bijection to 2-graphs is the following (see also Fig. \ref{fig:colouredgraph}):

For a $(\rk+1)$-coloured graph, the connected components of the $\rk$-coloured graph obtained by deleting all 0-edges define a set of vertex graphs; that is, the coloured edges are bijective to coloured strand sections in the 2-graph with pairing $\vs$ at each vertex according to this graph structure.
Clearly, all half edges $h$ are $\rk$-valent, $\hd=\rk$.
Then the $0$-edges define then stranded edges whereby a unique pairing $\es$ of the strand sections follows from the condition that only pairings of strand sections of the same colour are allowed. 
This means that in the result, the \emph{$\rk$-coloured 2-graph}, the involution $\es$ is redundant. 

This bijection maps only to coloured 2-graphs with connected vertices, that is all vertices have connected vertex graphs. In fact, the bijection is exactly the map $\vgp$ discussed in Rem.\,\ref{rem:vertexgraphs} which is bijective only upon restriction to 2-graphs with connected vertices. To extend to coloured 2-graphs with disconnected vertices the bijection $\vgb$ is needed. 
For this the coloured graphs lack the additional information which subsets of connected components form 2-graph vertices after 0-edge deletion.
\end{example}

\begin{figure}
    \centering
    \begin{tikzpicture}[scale=.6]
    \node [c, label=0:$\cong$] at (6.6,0) {};
    \node [c]	(a1)	at (-1.7,-1.7)	{};
    \node [c]	(c1)	at (-1.7,1.7)	{};
    \node [v]	(w1)	at (-1,-1)	{};
    \node [v]	(b1)	at (-1,1)	{};
    \node [v]	(w2)	at (1,1)	{};
    \node [v]	(b2)	at (1,-1)	{};
    \node [c]	(a4)	at (5.7,1.7)	{};
    \node [c]	(c4)	at (5.7,-1.7)	{};
    \node [v]	(w3)	at (3,-1)	{};
    \node [v]	(b3)	at (3,1)	{};
    \node [v]	(w4)	at (5,1)	{};
    \node [v]	(b4)	at (5,-1)	{};
    \path
\foreach \i/\j in {1/2}{
   	(w\i) edge   (b\j)
	(b\i) edge  node [c, label=above:$c_1$] {} (w\j)
	}
\foreach \i/\j in {3/4}{
   	(w\i) edge   (b\j)
	(b\i) edge node [c, label=above:$c_2$] {} (w\j)
	}
    \foreach \i in {1,2,3,4}{
	(w\i) edge 				node 	{}	(b\i)
	(w\i)	edge [bend left=30]	node 	{}	(b\i)
	(w\i)	edge [bend right=30]	node 	{}	(b\i)
	}
    \foreach \i/\j in {a1/w1,c1/b1,a4/w4,c4/b4,w2/b3,w3/b2}{
   	(\i) edge [es] (\j)
	};

    \begin{scope}[xshift=10cm]
    \node [v]	(1)	at (0,0)	{};
    \node [v]	(2)	at (4,0)	{};
    \node [vb]	(w1)	at (-1,-1)	{};
    \node [vb]	(b1)	at (-1,1)	{};
    \node [vb]	(w2)	at (1,1)	{};
    \node [vb]	(b2)	at (1,-1)	{};
    \node [vb]	(w3)	at (3,-1)	{};
    \node [vb]	(b3)	at (3,1)	{};
    \node [vb]	(w4)	at (5,1)	{};
    \node [vb]	(b4)	at (5,-1)	{};
    \path
\foreach \i/\j in {1/2}{
   	(w\i) edge [ev]   (b\j)
	(b\i) edge [ev]  node [c, label=above:$c_1$] {} (w\j)
	}
\foreach \i/\j in {3/4}{
   	(w\i) edge [ev]   (b\j)
	(b\i) edge [ev]  node [c, label=above:$c_2$] {} (w\j)
	}
    \foreach \i in {1,2,3,4}{
	(w\i) edge 	[ev]				(b\i)
	(w\i)	edge [ev,bend left=30]	(b\i)
	(w\i)	edge [ev,bend right=30]	(b\i)
	}
    \foreach \i in {w1,b1,w2,b2}{
   	(\i) edge [e] (1)
	}
    \foreach \i in {w3,b3,w4,b4}{
   	(\i) edge [e] (2)
	}	
    \foreach \i/\j in {w2/b3,w3/b2}{
   	(\i) edge [es] (\j)
	};
    \end{scope}
    \end{tikzpicture}
    \caption{Example of a (4+1)-coloured graph (left) and its corresponding 4-coloured 2-graph (in its vertex-graph representation, right). 
    In the coloured graph, dashed lines represent colour-$0$ edges, internal and external.
    The internal ones become the edges in the 2-graph whose strand structure is here completely determined by the strands of colour $c_1,c_2\in\{1,2,3,4\}$.
    }
    \label{fig:colouredgraph}
\end{figure}
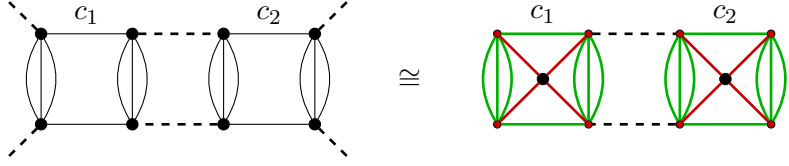

For some purposes it is useful to define the pairings $\vs$ and $\es$ explicitly in terms of subsets of $2^{S}$ like the edges in Def.\,\ref{def:1graphb}:
\begin{defin}[2-graph with edge set]\label{def:2graphb}
A \cdef{2-graph} is a tuple $(\V,\H,\v,\E;\S,\h,\VS,\ES)$ with the above properties (1) -- (4b) -- (6) and
\begin{enumerate}[label=(\arabic*),align=left,leftmargin=*,start=7]
\item[(7b)] a complete partition of $\S$ into disjoint subsets of two elements adjacent to the same vertex, that is $\VS\In\pset{\S}$ such that
\begin{itemize}[noitemsep]
\item $a\cap b=\emptyset$ for all $a,b\in\VS$,
and $|a|=2$ for all $a\in\VS$,
\item for every vertex $v\in\V$ and every $s_{1}\in(\nu\circ\mu)^{-1}(v)$ there is an $s_{2}\in(\nu\circ\mu)^{-1}(v)$ such that $\{s_{1},s_{2}\}\in\VS$.
\end{itemize}
\item[(8b)] a set of disjoint, two-element subsets of $\S$ compatible with the edges $\E$, that is $\ES\In\pset{\S}$ such that
\begin{itemize}[noitemsep]
\item $a\cap b=\emptyset$ for all $a,b\in\ES$
and $|a|=2$ for all $a\in\ES$,
\item for every $e=\{h_{1},h_{2}\}\in\E$ and every $s_{1}\in\mu^{-1}(h_{1})$ there is an $s_{2}\in\mu^{-1}(h_{2})$ such that $\{s_{1},s_{2}\}\in\ES$.
\item An $h\in\H$ is not contained in any edge $e\in\E$ iff all $s\in\mu^{-1}(h)$ are not contained in any edge strand $a\in\ES$.
\end{itemize}
\end{enumerate}
\end{defin}

\begin{proposition}
The two 2-graph definitions, Def.\,\ref{def:2grapha} and Def.\,\ref{def:2graphb}, are equivalent.
\end{proposition}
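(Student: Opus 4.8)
The plan is to reduce the statement to the already-established equivalence of the two \emph{graph} definitions, Def.~\ref{1grapha} and Def.~\ref{def:1graphb}, and then to run the analogous argument on the two strand pairings. Properties (1)--(3), (5) and (6) appear verbatim in both Def.~\ref{def:2grapha} and Def.~\ref{def:2graphb}, so it suffices to show that the involution data $(\ei,\vs,\es)$ on the one hand and the edge-set data $(\E,\VS,\ES)$ on the other determine one another bijectively, and that the compatibility conditions of each definition go over into those of the other.

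First I would recall the elementary correspondence that already underlies the $1$-graph case: an involution $\tau$ on a finite set $X$ partitions $X$ into its fixed points (singletons) and its genuine two-cycles, and the collection of the latter is exactly a set of disjoint two-element subsets of $X$; conversely, any such set of pairs extends uniquely to an involution by declaring every unpaired element to be a fixed point. Applied to $\ei$ on $\H$ this is the equivalence $\ei\leftrightarrow\E$ of \cite{Borinsky:2018}; applied to $\es$ on $\S$ it gives $\es\leftrightarrow\ES$, the two-cycles of $\es$ becoming the members of $\ES$ and the fixed points of $\es$ being the strands lying in no $a\in\ES$. Because $\vs$ is required fixed-point free in (7a), the same correspondence produces a \emph{complete} partition of $\S$ into pairs, namely the set $\VS$ of (7b), and conversely a complete pairing $\VS$ yields a fixed-point-free $\vs$.

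The substance of the proof is then to check that the set-theoretic side conditions match. For the vertex strands, the identity $\v\circ\h\circ\vs(s)=\v\circ\h(s)$ of (7a) says precisely that each pair $\{s,\vs(s)\}$ lies in a single fibre $(\v\circ\h)^{-1}(v)$, which is the ``adjacent to the same vertex'' clause of (7b); the remaining clause of (7b), that every strand over a vertex is actually paired, is fixed-point freeness of $\vs$. For the edge strands, the first clause of (8a), $\ei\circ\h(s)=\h\circ\es(s)$, says that a two-cycle $\{s,\es(s)\}$ sits over a genuine edge $\{\h(s),\h(\es(s))\}\in\E$, which is the compatibility of $\ES$ with $\E$ demanded in (8b); I would note here that disjointness of $\ES$ together with this compatibility, applied to both half-edges of an edge, forces the induced pairing to be a perfect matching of $\h^{-1}(h_1)$ with $\h^{-1}(h_2)$, so that $\es$ is a well-defined involution (and, incidentally, reproduces the remark that edges only occur between half-edges of equal strand degree). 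The second clause of (8a), that $s$ is fixed by $\es$ iff $\h(s)$ is fixed by $\ei$, is exactly the last bullet of (8b): a half-edge is an external leg iff all of its strands are external.

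Finally I would verify that the two passages are mutually inverse, which is immediate since an involution is completely recovered from its partition into singletons and pairs, and vice versa. The only point demanding genuine care --- and hence the main, if modest, obstacle --- is the biconditional fixed-point condition in (8a): one must check that it is equivalent to, rather than merely implied by, the ``not contained in any edge'' clause of (8b), so that half-edges whose strands are partly paired and partly unpaired are excluded symmetrically on both sides. This holds because the per-strand ``iff'' of (8a) forces the strands over a fixed half-edge to behave uniformly, which is precisely the per-half-edge dichotomy recorded in the last bullet of (8b); the two are thus two renderings of the same external-versus-internal alternative, once at the level of strands and once at the level of half-edges.
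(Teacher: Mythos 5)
Your proof is correct and follows essentially the same route as the paper's: the standard correspondence between involutions and sets of disjoint two-element subsets, the identification of (7a) with (7b) via same-vertex adjacency plus fixed-point-freeness/completeness, and of (8a) with (8b) via edge-compatibility plus the fixed-point/external-strand dichotomy. Your version is somewhat more explicit (e.g.\ noting that disjointness forces a perfect matching over each edge, and checking the two passages are mutually inverse), but these are elaborations of the same argument, not a different approach.
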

\begin{proof}
The fixed-point free involution $\vs$ partitions $\S$ into disjoint 2-element sets $\VS\In 2^\S$.
Property (7a) says that such pairs are adjacent to the same vertex $v\in\V$. Thus, for every $s_1\in\S$ adjacent to $v\in\V$ there is also the unique $s_2=\vs(s_1)\ne s_1$ adjacent to $v$ which gives (7b), and (7a) follows from (7b) in the same way.

The same argument applies to the adjacency of pairs $\{s,\es(s)\}\in\ES$ to an edge $\{h,\ei(h)\}\in\E$.
The equivalence of fixed points of $\ei$ and $\es$ is equivalent to the equivalence of half-edges $h\in\H$ and their adjacent strands $s\in\mu^{-1}(h)$ not occurring in the power sets $\E$ and $\ES$ respectively. Together this shows that (8a) and (8b) are equivalent.
\end{proof}

\begin{defin}[external edges and strand sections]
Half-edges which are not part of any edge are called \cdef{external},
$\Hx:=\H\setminus\bigcup_{e\in\E}e$.
Equivalently, their associated strand sections are called \cdef{external strands}, $\Sx:=\S\setminus\bigcup_{s\in\ES}s$.
For some purpose it is meaningful to define \cdef{external edges} as one-element sets  $\Ex:=\{\{h\}|h\in\Hx\}$ compatible with the actual (internal) edges.
\end{defin}


\begin{defin}[faces]\label{def:faces}
Let $f=(s_{1},s_{2},...,s_{2n})$ be an $2n$-tuple of distinct strand sections $s_{i}\in\S$, $i=1,2,...,2n$, which are each mapped to the following one by $\vs$ and $\es$ alternating:
\[
s_{1}\overset{\vs}{\mapsto}s_{2}\overset{\es}{\mapsto}s_{3}\overset{\vs}{\mapsto}...\overset{\vs}{\mapsto}s_{2n}.
\]
Then $f$ is an \cdef{external face} iff $s_{1}$ and $s_{2n}$ are fixed points of $\es$, \ie they are external.

If $\es(s_{2n}) = s_{1}$ there is an equivalence relation of cyclic permutations $s_{i}\mapsto s_{i+2}$. The equivalence class $[(s_{1},s_{2},...,s_{2n})]$ is called an \cdef{internal face}.

In the following, $\Fx$ denotes the set of external faces, $\Fi$ of internal faces and $\F=\Fx\cup\Fi$.
As faces are complete chains of strand sections, they are sometimes simply called strands. This is why, to avoid confusion, I refer to the elements $s\in\S$ as strand \emph{sections}. 
\end{defin}

The reason to call the chains of strand sections \emph{faces}, and the diagrams themselves \emph{2-graphs} is that they are indeed 2-dimensional complexes (in the sense of Reidemeister \cite{Reidemeister:1938vf}; the reason to use the old combinatorial definition is simply that 2-graphs are also purely combinatorial objects \cite{Thurigen:2015}). 
\begin{proposition}\label{prop:complex}
2-graphs are pure complexes. If a 2-graph has no trivial (degree-0) vertices nor trivial half-edges it is furthermore a 2-dimensional complex.
\end{proposition}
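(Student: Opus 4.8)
The plan is to read off an explicit cell structure from the defining data of a 2-graph $\tgl$ and to check it against the axioms of a (Reidemeister) complex. I would take the $0$-cells to be the vertices $\V$, the $1$-cells to be the edges $\E$ together with the external edges $\Ex$, and the $2$-cells to be the faces $\F=\Fi\cup\Fx$ of Def.\,\ref{def:faces}. The incidence (boundary $\br$) relations are dictated by the adjacency maps: an edge $e=\{h_1,h_2\}\in\E$ has $\br e=\{\v(h_1),\v(h_2)\}$, while for a face, described by an alternating chain $s_1\overset{\vs}{\mapsto}s_2\overset{\es}{\mapsto}s_3\overset{\vs}{\mapsto}\cdots$, the incident $1$-cells are the edges crossed by its $\es$-steps and the incident $0$-cells are the vertices at which its $\vs$-steps turn. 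Both are well defined: $\es$ sends a strand across a genuine edge since $\ei\circ\h=\h\circ\es$ (property (8a)), and $\vs$ turns within a single vertex since $\v\circ\h\circ\vs=\v\circ\h$ (property (7a)).

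Next I would verify that each face attaches to the $1$-skeleton along a well-defined boundary walk. The alternating structure of Def.\,\ref{def:faces} means a face traverses a sequence of half-edges in which successive ones are joined by ``turning at a vertex'' ($\vs$) and ``crossing an edge'' ($\es$). For an internal face the closure condition $\es(s_{2n})=s_1$ turns this into a closed alternating cycle of $n$ edges and $n$ vertices, so the face is a genuine polygonal $2$-cell whose boundary is a cycle in the $1$-skeleton; for an external face the two $\es$-fixed ends $s_1,s_{2n}$ sit at external half-edges, so the $2$-cell is attached along a path capped by the external edges $\Ex$. It then remains to confirm the incidence-matching axiom $\br\circ\br=0$ (mod $2$): summing the endpoints of all edges in a face's boundary walk, each traversed vertex is entered and left and hence counted an even number of times (the capping by $\Ex$ supplying the matching contributions at external ends). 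This shows that $\tgl$ is a complex, and the gluing is \emph{pure} in that every $1$-cell is attached to $0$-cells and every $2$-cell to the $1$-skeleton, dimension by dimension.

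For the dimension statement I would determine the maximal cells. A pure complex is $2$-dimensional precisely when every $0$- and $1$-cell is a face of some $2$-cell. A vertex $v$ lies in no face exactly when $\v^{-1}(v)=\emptyset$, \ie when it is a trivial degree-$0$ vertex ($\vd=0$); otherwise it carries a half-edge, hence a strand section and its $\vs$-partner, so some face turns at $v$ and $v\in\br f$. Similarly an edge $e=\{h_1,h_2\}$ bounds no face exactly when its half-edges carry no strands, \ie when they are trivial half-edges ($\hd=0$); otherwise every strand at $h_1$ is paired by $\es$ across $e$ and lies on a face, so $e\in\br f$. Hence, assuming $\tgl$ has neither trivial vertices nor trivial half-edges, every lower cell is a face of a $2$-cell and the top dimension is $2$, giving a pure $2$-dimensional complex.

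The step I expect to be the main obstacle is the middle one: showing that the alternating $\vs/\es$-chain of each face really assembles into a single, consistent attaching map to the $1$-skeleton, with no strand section used twice, and that the low-valence and boundary degeneracies do not spoil this. In particular one must treat the univalent and bivalent vertices, where a half-edge carries paired sections such as $s_{11},s'_{11}$ (\cf the discussion in Example \ref{ex:matrixdiagrams}), and the external faces, whose boundary is only a path capped by $\Ex$. Checking that in all these cases the chain either closes ($\es(s_{2n})=s_1$) or terminates at external strands consistently---so that the face boundaries are genuine cycles (resp.\ properly capped paths) and the incidence axioms hold---is what carries the bulk of the argument.
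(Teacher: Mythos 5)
Your cell decomposition and incidence relations are exactly the paper's, but your verification runs along a genuinely different (and stronger) track. The paper invokes Reidemeister's purely combinatorial notion of complex: a set of cells with a dimension map and a partial order satisfying the interpolation property (CP), where purity means every positive-dimensional cell bounds a $0$-cell and $2$-dimensionality means every cell bounds a face. Since the paper \emph{defines} $v<f$ to mean $v<e<f$ for some edge $e$, property (CP) holds by fiat, purity is immediate from the adjacency maps, and the only substantive point is $2$-dimensionality, which requires surjectivity of $\v$ and $\h$ --- \ie precisely the no-trivial-vertices/no-trivial-half-edges hypothesis. You instead verify CW-style axioms: well-defined polygonal attaching walks and the mod-$2$ incidence identity $\br\circ\br=0$, neither of which Reidemeister's definition asks for. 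That is why the step you flag as ``the main obstacle'' (that alternating $\vs$/$\es$-chains close up or terminate at external strands) feels heavy to you: in the paper's formalization it is needed only to see that every vertex and edge bounds \emph{some} face, and it follows in one line from finiteness together with $\vs,\es$ being involutions with $\vs$ fixed-point free, so the chain through any strand section is deterministic in both directions and hence either closes into a cycle or ends at $\es$-fixed points. Once that step is written out, your argument proves a stronger polygonal-complex statement that implies the proposition; the paper buys brevity by working with the weaker combinatorial notion. One small correction: your biconditional ``$v$ lies in no face exactly when $\v^{-1}(v)=\emptyset$'' is false as stated, since a vertex all of whose half-edges are trivial also lies in no face; only the implication you actually need --- that under the joint hypothesis every vertex bounds a face --- is valid, and your ``hence a strand section'' step silently uses the second hypothesis there.
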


\begin{proof}
Recall that according to \cite{Reidemeister:1938vf} a complex $(\cp,\dm,\le)$ is a set $\cp$ of cells $c$ with a dimension map  $\dm:\cp\to\N$   and a partial ordering $\le$ 
that obeys the property (CP): If $c>c''$ and $\dm(c)-\dm(c'')>1$, then there is a cell $c'$ such that $c>c'>c''$.

For a 2-graph $G$, define the complex
\[
\cp_{G}:=\V_{G}\cup(\E_{G}\cup\Ex_{G})\cup\F_{G}
\]
and assign the dimension $0,1,2$ to vertices, edges and faces respectively. The adjacency maps $\v$ and $\h$ induce a partial ordering:
For $v\in\V, e\in\E$ and $f\in\F$: $v<e$ iff $h\in e$ such that $\v(h)=v$; $e<f$ iff there is an $h\in e$ and $s\in f$ (independent of a chosen representative of $f$) such that $\h(s)=h$ ; and $v<f$ iff there is an $e\in\E$ such that $v<e<f$.
Then (CP) holds by definition.

A complex is pure if each cell of non-zero dimension bounds a 0-cell \cite{Reidemeister:1938vf}. By definition of the bounding relation via adjacency maps $\v$ and $\h$ this holds.

A complex is $n$-dimensional iff $n$ is the maximal dimension of cells and for each cell there is a bounding $n$-cell \cite{Reidemeister:1938vf}. For $\cp_{G}$ the maximal dimension is $n=2$ by definition.
However, as $\v$ and $\h$ are not necessarily surjective, there might be vertices or half-edges not bounded by a face. Thus $\cp_{G}$ is 2-dimensional only if both are surjective, that is all vertices and half-edges are not 0-valent.
\end{proof}


\begin{remark}[$\std$-dimensional complexes from edge-coloured graphs]
\label{rem:manifolds}
With some additional structure, 2-graphs can also be bijective to complexes of higher dimension, and even to discrete (pseudo) manifolds.
In fact, bipartite $(\rk+1)$-coloured graphs (Ex.~\ref{ex:colouredgraphs}) are dual to $\rk$-dimensional abstract simplicial pseudo manifolds. This works since the connected components upon deleting edges of $p$ colours define $(p-1)$-dimensional cells~\cite{Gurau:2011dw,Gurau:2010iu}.

\end{remark}


\section{Contraction and boundary}\label{sec:contraction}

The central operation for the coproduct on 2-graphs, and thus for the Hopf algebra, is contraction of a subgraph.
In effect, the components of the subgraph are substituted by single vertices with the subgraph's external structure, usually called the \emph{residue}.
For a \mbox{2-graph} the residue is bijective to a boundary 1-graph.
Since such a boundary 1-graph can be disconnected even for a connected 2-graph
it is crucial to keep track of which boundary (1-graph) connected component belongs to which bulk (2-graph) connected component.

\begin{defin}[subgraph]
For a 2-graph $\tgl$, a \cdef{subgraph} $\sgl$ is a 2-graph which is only different from $\tgl$ in having $\E_\sgl\In\E_\tgl$ and $\ES_\sgl\In\ES_\tgl$. Then one writes $\sgl\sgr \tgl$.
\end{defin}
Note that $\E_\sgl$ and $\ES_\sgl$ must still be compatible due to the 2-graph property (8b). They form indeed stranded edges.


\begin{defin}[contraction]\label{def:contraction}
For 2-graphs $\sgl\sgr \tgl$ the contraction of $\sgl$ in $\tgl$ is defined by shrinking all stranded edges which belong to $\sgl$. That is, the contracted graph $\tgl/\sgl$ consists of
\begin{itemize}[leftmargin=*, noitemsep]
\item $\V_{\tgl/\sgl} = \cc_{\sgl}$ the set of connected components of $\sgl$, that is each connected component in $\sgl$ is shrunken to a single vertex,
\item $\H_{\tgl/\sgl} = \Hx_{\sgl}$, $\S_{\tgl/\sgl} = \Sx_{\sgl}$, only half-edges and strand sections which are external in $\sgl$ remain in $\tgl/\sgl$,
\item $\v_{\tgl/\sgl} = \pi_\sgl\circ\v_{\sgl}\big|_{\H_{\tgl/\sgl}}$ where $\pi_\sgl:\V_{\sgl}\to\cc_{\sgl}$ is the projection of vertices on their connected component,
\item $\h_{\tgl/\sgl} = \h_{\sgl} \big|_{\S_{\tgl/\sgl}}$, simply the restriction to the remaining strand sections,
\item $\E_{\tgl/\sgl} = \E_{\tgl}\setminus\E_{\sgl}$, $\ES_{\tgl/\sgl} = \ES_{\tgl}\setminus\ES_{\sgl}$,
stranded edges of $\sgl$ are deleted and
\item $\VS_{\tgl/\sgl} = \left\{\{s_{1},s_{2n}\} | (s_{1}...s_{2n})\in\Fx\right\}$, external faces are shrunken to the strands at the new contracted vertices.
\end{itemize}
Note that internal faces are deleted completely since $\S_{\tgl/\sgl} = \Sx_{\sgl}$. 
\end{defin}


Take as an example the coloured 2-graph  of Fig.\,\ref{fig:colouredgraph}, with explicit labelling of half-edges
\[\label{eq:r4fishgraph}
G = \cfishl \, .
\]
It has two edges, and thus $2^2=4$ different subgraphs
\[\label{eq:subgraphexample}
H_{0} =  \cfishc \,,\;  H_1=\cfisha \,,\;  H_2=\cfishb \,,\;  H_3=G=\cfish 
\]
For contraction one has to distinguish the cases $c_1=c_2=c$ and $c_1\ne c_2$. In the first case,
\[\label{eq:contractionexample1}
G/H_0 = \cfishlc \,,\; G/H_1=\cvsla \,,\; G/H_2=\cvslb \,,\; G/G=\cvfl
\]
while for $c_1\ne c_2$ the contractions are
\[\label{eq:contractionexample2}
G/H_0 = \cfishl \,,\; G/H_1=\cvslc \,,\; G/H_2=\cvsld \,,\; G/G=\cvftl\,.
\]
The last contraction $G/G$ is an example of a multi-trace vertex, \ie a vertex with disconnected vertex graph.

\

In field theory, the structure of graphs is relevant up to relabelling their components.
In this sense, $H_1$ and $H_2$ in \eqref{eq:subgraphexample} are equivalent and so are  $G/H_1$ and $G/H_2$.

\begin{defin}[iso/automorphism]
An \cdef{isomorphism} $j$ between 2-graphs $\tgl_1$ and $\tgl_2$ is a triple of bijections $j=(\jv,\jh,\js)$ where $\jv:\V_{\tgl_1}\to\V_{\tgl_2}$, $\jh:\H_{\tgl_1}\to\H_{\tgl_2}$ and  $\js:\S_{\tgl_1}\to\S_{\tgl_2}$ such that 
\begin{itemize}[leftmargin=*,noitemsep]
    \item $\v_{\tgl_2} = \jv\circ\v_{\tgl_1}\circ\jh^{-1}$ and $\h_{\tgl_2} = \jh\circ\h_{\tgl_1}\circ\js^{-1}$,
    \item $\ei_{\tgl_2} = \jh\circ\ei_{\tgl_1}\circ\jh^{-1}$,
    \item $\vs_{\tgl_2} = \js\circ\vs_{\tgl_1}\circ\js^{-1}$ and $\es_{\tgl_2} = \js\circ\es_{\tgl_1}\circ\js^{-1}$.
\end{itemize}
A 2-graph \cdef{automorphism} is an isomorphism from a 2-graph $G$ to itself.
\end{defin}
\begin{defin}[unlabelled 2-graphs]
Two 2-graphs $\tgl_1$ and $\tgl_2$ are equivalent upon relabelling,
$\tgl_1 \tsim \tgl_2$ ,
iff there is a 2-graph isomorphism $\tgl_1\to\tgl_2$. Such an equivalence class is an \emph{unlabelled 2-graph},
$\tg=[\tgl_1]_{\tsim} =[\tgl_2]_{\tsim}$.
Let $\TG$ denote the set of all unlabelled 2-graphs and $\OG$ the set of unlabelled 1-graphs. 
\end{defin}
Thereby I 
use the convention to denote unlabelled objects by Greek letters while labelled ones by Roman letters.
Capital letters refer to 2-graphs and small letters to 1-graphs.

\begin{defin}[residue]
In analogy to usual Feynman graphs, one refers to the class of 2-graphs without any stranded edge as \cdef{residues} $\Res^{*}\In\TG$ and denotes the set of those with a single vertex as $\Res\In\Res^{*}$.
Shrinking all stranded edges of a 2-graph results in a residue. 
This is commonly defined as the \cdef{residue map} 
\[
\res:\TG\to\Res^{*},\quad \tg\mapsto\tg/\tg.
\]
Furthermore, for every 2-graph $\tg\in\TG$ there is a trivial subgraph $\sg_0\sgr\tg$ without stranded edges, called the \cdef{skeleton} (not to be confused with the $p$-skeleton of a cell complex):
\[
\skl:\TG\to\Res^{*},\quad \tg \mapsto \sg_0 \, .
\]
This can be used to define a subset of 2-graphs with a given type of vertices $\Vtyper\In\Res^*$ as
\[\label{eq:2graphsVtyperesidue}
\TG(\Vtyper) := \left\{\tg\in\TG \, \middle| \,  \skl(\tg)\in\Vtyper\In\Res^* \right\}
\, .\]
\end{defin}

\begin{remark}[boundaries]
The residue is very similar to the usual notion of a boundary for 2-dimensional complexes, with subtle but crucial distinctions.
Indeed, the union of vertex graphs \eqref{eq:vertexgraph} of the residue of a 2-graph $\tg\in\TG$ is its boundary. 
Thus it is meaningful to define the boundary map
\[
\br: \TG\to\OG, \quad  
\tg\mapsto \br\tg := \vgp(\res(\tg))
\]
where by slight abuse of notation the image of $\vgp$ is simply taken as the set of 1-graphs~$\OG$ since there are no stranded edges in the image of $\res$.
It is straightforward to check that this definition of boundary is equivalent to the notion of boundary for a complex according to Prop.~\ref{prop:complex}, in particular in the example of $\rk$-coloured graphs (Ex.\,\ref{ex:colouredgraphs}) to the notion of boundary of $\rk$-dimensional simplicial pseudo manifolds (Rem.\,\ref{rem:manifolds}).

However, for renormalization in field theory it is necessary to distinguish the boundaries of the different connected components of the 2-graph $\tg=\bigsqcup_{i\in I}\tg_i$.
To appropriately take this into account another boundary map is needed,
\[
\brt: \TG\to\pmset{\OG},\quad 
\tg=\bigsqcup_{i\in I}\tg_i\mapsto \brt\tg := \{\br\tg_i\}_{i\in I} \, ,
\]
where here $\pmset{\OG}$ denotes the power set of all multisets of $\OG$ since any $\og\in\OG$ may appear multiple times in $\brt\tg$.
In fact, this boundary map contains exactly the same information as the residue.
That is, the two are bijective by changing to the vertex-graph representation, 
$
\brt\tg 
= \vgb(\res(\tg_i))
$, 
since each connected component $\tg_i$ maps to one vertex under the residue map.
Thus, the difference between $\brt$ and $\br$ mirrors the one between $\vgb$ and $\vgp$ discussed in Rem.\,\ref{rem:vertexgraphs}.

Similarly one can distinguish two types of skeletons in the vertex-graph picture,
\begin{align}
\vgs:\TG\to \OG, & \quad \tg \mapsto \vgs\tg := \vgp(\skl(\tg))=\bigsqcup_{v\in\V_\tg}\og_v \quad\textrm{and}\\ 
\vgt:\TG\to \pmset{\OG}, & \quad \tg \mapsto \vgt\tg := \vgb(\skl(\tg))=\{\og_v\}_{v\in\V_\tg} \, .
\end{align}
This allows for a definition of 2-graphs with specific vertex types in terms of their vertex graphs $\mathbf{V}\In\OG$ as
\[\label{eq:2graphsVtypegraphs}
\TG(\Vtypeg) := \left\{\tg\in\TG \, \middle| \,  \vgt\tg\in\pmset{\mathbf{V}} \right\}
\]
which is equivalent to the definition \eqref{eq:2graphsVtyperesidue} in the sense that for $\Vtyper\In\Res^*$ it holds that $\TG(\Vtypeg)=\TG(\Vtyper)$ iff $\pmset{\Vtypeg}=\brt\Vtyper$.

The important point here is that it makes a difference to compare the external graph structure on the level of vertex graphs or the single graph given by their disjoint union
Take for example the unlabelled 2-graph $\tg=\cfish$ of Fig.\,\ref{fig:colouredgraph} or \eqref{eq:r4fishgraph}.
If the distinguished colours $c_1\ne c_2$ are not the same, the residue $\res(\tg)=\cvft$ is a multi-trace vertex. Thus for several such components, \eg $\tg\sqcup\tg$, there is a difference between
\[\label{eq:boundaryexample}
\br(\tg\sqcup\tg) = \cvtvg \sqcup \cvtvg \sqcup \cvtvg \sqcup \cvtvg 
\quad \textrm{ and } \quad 
\brt(\tg\sqcup\tg) = \left\{\cvtvg\sqcup\cvtvg\,,\,\cvtvg\sqcup\cvtvg \right\} 
\]
as the boundary $\br$ misses the information which boundary component belongs to which 2-graph component.
\end{remark}

\begin{remark}[topology change]
When one can associate manifolds with {2-graphs}, contraction may lead to change of topology.
The boundary of a connected component $\sg_i$ of a subgraph $\sg\sgr\tg$ might be disconnected, that is a 1-graph with several connected components $\br\sg_i=\bigsqcup_j\og_j$, as exemplified in \eqref{eq:boundaryexample}.
Contraction of such $\sg_i$ leads to multi-trace vertex in $\tg/\sg$.
From the field-theory perspective alone, it is not obvious how to understand the topology of such 2-graphs in cases where they relate to pseudo manifolds (\eg surfaces (Ex.\,\ref{ex:matrixdiagrams}) or higher dimensional (Ex.\,\ref{ex:colouredgraphs})):
\begin{enumerate}[label=(\arabic*.),leftmargin=*]
\item One could focus simply on the strands (propagation of degrees of freedom) and ignore the fact that the disconnected parts are adjacent to the same vertex; this would lead to topology change since the contracted component would be separated.
\item Topology change upon contraction could be avoided by simply assigning to the multi-trace vertex with vertex graph $\br\sg_i=\bigsqcup_j\og_j$ the topology of $\sg_i$.
But this might not be unique as there could be various 2-graphs $\sg_i$ related to different topologies.
\item In matrix models such multi-trace vertices are interpreted as singular points.%
\footnote{
In particular, one finds that there is a region in the phase diagram of multi-trace matrix models \cite{Das:1990gp,Korchemsky:1992uj} with the critical behaviour of the continuum random tree \cite{Aldous:1991ja}. This regime is dominated by ``cactus'' geometries, that is surfaces connected along single points thus leading to a tree-like fractal structure.
}
This suggests to understand multi-trace vertices as points at which the complex is only one-dimensionally connected.
\end{enumerate}
In fact, in some examples of surfaces, the last possibility is in agreement with the Euler characteristic.
Consider for example the following contraction of combinatorial maps
\[
\tg/\sg \cong \left.\maptorus \middle/ \mapd \right. = \mapcon\, .
\]
While $\tg$ has torus topology, $\sg$ is a cylinder. 
The contraction of $\sg$ in $\tg$ leads to a double-trace vertex whose strands are disconnected. 
Considering only the strands but not the vertex (1.), $\tg/\sg$ is a sphere different from the original torus (2.). 
On the other hand, taking the vertex as a singular point (3.) gives a pinched torus and in fact a naive calculation of Euler characteristic gives
\[
\chi(\tg/\sg) = \nv - \nei + \nf = 3 - 4 + 2 = 1
\]
as expected for a pinched torus. It is an interesting question whether these topological considerations can be generalized and would apply for example to the $\rk$-dimensional pseudo manifolds of $\rk$-coloured graphs (Ex.\,\ref{ex:colouredgraphs}).
\end{remark}




\section{The 2-graph coalgebra and the central identity}\label{sec:coalgebra}

Algebraic structure can be defined on 2-graphs in the same way as for 1-graphs \cite{Borinsky:2018}. 
With the proper definition of a 2-graph (Def.~\ref{def:2grapha}) and of contraction of subgraphs  (Def.~\ref{def:contraction}) a product, a coproduct as well as an antipode can be defined in a standard way promoting the set of 2-graphs $\TG$ to an algebra, coalgebra and Hopf algebra respectively.
The crucial difference to 1-graphs lies in the combinatorial details as I will exemplify with the central identity for the action of the coproduct on 
power series over 2-graphs in Theorem \ref{th:centralidentity}.

\begin{defin}[2-graph algebra]\label{def:2graphalgebra}
Let $\btg:=\langle\TG\rangle$ be the $\Q$-algebra generated by all 2-graphs $\tg\in\TG$ with multiplication in terms of the disjoint union defined on the generators,
\[
m:\btg\otimes\btg\to\btg \quad , \quad \tg_1\otimes\tg_2\mapsto\tg_1\sqcup\tg_2 \, .
\]
This is clearly associative and commutative and the empty 2-graph $\one$ is the neutral element.
With a unit, that is a linear map $u:\mathbb{Q}\to\btg,q\mapsto q\one$ this is a unital commutative algebra.

The set $\TG(\Vtypeg)$ of 2-graphs  with vertex graphs of restricted types $\Vtypeg\In\OG$, \eqref{eq:2graphsVtypegraphs}, generates a subalgebra $\btg_\Vtypeg:=\langle \TG(\Vtypeg) \rangle$ since by definition $m(\btg_\Vtypeg\otimes\btg_\Vtypeg)\In\btg_\Vtypeg$.
\end{defin}

As common, there is a {coproduct} $\cop$ on the algebra of 2-graphs defined on their generators as a sum over all subgraphs tensored with their contraction.
For example, the coloured 2-graph of \eqref{eq:r4fishgraph} with distinguished colours $c_1\ne c_2$ has
\begin{align}
   \cop\left(\cfish \right) =&\quad \cfishc \otimes \cfish + \cfisha \otimes \cvsd \nonumber\\ &+ \cfish \otimes \cvft 
\end{align}
In this way one obtains a coalgebra and furthermore a bialgebra:

\begin{proposition}[2-graph coalgebra]
Equipping $\btg$ with the \cdef{coproduct}, a linear map defined on its generators as 
\[
\cop: \btg \to \btg \otimes \btg , \quad  \tg \mapsto \sum_{\sg\sgr\tg} \sg \otimes \tg/\sg
\, ,\]
and the counit, a projector defined as
\[
\cou: \btg \to \Q , \quad  \tg \mapsto 
\begin{cases}
1 \textrm{ if }  \tg \in \Res^{*} \\
0 \textrm{ else }
\end{cases} \, ,
\]
$\btg$ is an associative counital coalgebra.
Since $m$ and $\cop$ are compatible in the sense that $\cop$ is an algebra homomorphism and $m$ is a coalgebra homomorphism, $\btg$ is furthermore a (unital and counital ) bialgebra.
\end{proposition}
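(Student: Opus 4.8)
The plan is to establish the four properties that make $(\btg, m, u, \cop, \cou)$ a bialgebra: coassociativity of $\cop$, counitality of $\cou$, the fact that $\cop$ is an algebra homomorphism, and the fact that $\cou$ is an algebra homomorphism. Since all maps are defined $\Q$-linearly on the generating set $\TG$, it suffices in every case to verify the identity on a single unlabelled 2-graph $\tg$ (or a product $\tg_1\sqcup\tg_2$), and the verifications reduce to bookkeeping over the poset of subgraphs $\{\sg \mid \sg\sgr\tg\}$, which by Def.~\ref{def:contraction} is just the Boolean lattice $\pset{(\E_\tg, \ES_\tg)}$ of subsets of stranded edges.

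\emph{Coassociativity.} First I would compute both $(\cop\otimes\id)\circ\cop$ and $(\id\otimes\cop)\circ\cop$ applied to a generator $\tg$. Each side yields a sum indexed by nested pairs of subgraphs $\sg'\sgr\sg\sgr\tg$, and the content of coassociativity is the natural identification
\[\label{eq:coassoc}
(\tg/\sg')\big/(\sg/\sg') \;=\; \tg/\sg \, ,
\]
together with the observation that subgraphs of the contraction $\tg/\sg$ correspond bijectively to subgraphs $\sg''$ with $\sg\sgr\sg''\sgr\tg$. The essential verification is thus the associativity of contraction, \eqref{eq:coassoc}: shrinking the stranded edges of $\sg'$ first and then shrinking those of $\sg/\sg'$ produces the same connected-component structure, the same surviving external half-edges and strand sections, and the same external faces (via Def.~\ref{def:faces}) as shrinking all edges of $\sg$ at once. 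This is where the careful definition of $\VS_{\tg/\sg}$ in terms of external faces matters, and I expect it to be the main obstacle, since one must check that the composed external-face chains of Def.~\ref{def:faces} glue correctly under successive contractions rather than merely that vertex sets match.

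\emph{Counit.} For the counit axiom I would check $(\cou\otimes\id)\circ\cop = \id = (\id\otimes\cou)\circ\cop$ on a generator. In $\sum_{\sg\sgr\tg}\cou(\sg)\,\tg/\sg$, the factor $\cou(\sg)$ vanishes unless $\sg\in\Res^*$, \ie $\sg$ has no stranded edges, which forces $\sg=\skl(\tg)$ to be the skeleton; then $\tg/\skl(\tg)=\tg$, leaving a single surviving term. The right-hand counit is symmetric, with the sole contributing term being $\sg=\tg$, for which $\tg/\tg=\res(\tg)\in\Res^*$ and $\cou$ returns $1$.

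\emph{Compatibility.} Finally I would verify that $\cop$ is an algebra homomorphism, $\cop(\tg_1\sqcup\tg_2)=\cop(\tg_1)\cop(\tg_2)$ with the induced product on $\btg\otimes\btg$; this follows because a subgraph of a disjoint union is uniquely a disjoint union of subgraphs of the components ($\sg\sgr\tg_1\sqcup\tg_2 \iff \sg=\sg_1\sqcup\sg_2$ with $\sg_i\sgr\tg_i$), and contraction commutes with disjoint union, $(\tg_1\sqcup\tg_2)/(\sg_1\sqcup\sg_2)=(\tg_1/\sg_1)\sqcup(\tg_2/\sg_2)$. That $\cou$ is an algebra homomorphism is immediate since $\tg_1\sqcup\tg_2\in\Res^*$ iff both $\tg_i\in\Res^*$, giving $\cou(\tg_1\sqcup\tg_2)=\cou(\tg_1)\cou(\tg_2)$, and $\cou(\one)=1$. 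Dually $m$ is then a coalgebra homomorphism by the same identities, so $\btg$ is a bialgebra; the only genuinely non-formal input is \eqref{eq:coassoc}, everything else being the standard Connes--Kreimer argument transported verbatim to 2-graphs.
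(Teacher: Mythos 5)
Your proposal is correct and follows essentially the same route as the paper, whose proof is simply a citation to the standard 1-graph argument (Borinsky, Props.~5.2.1--5.2.2) with the remark that it transfers verbatim to 2-graphs. Your writeup is exactly that argument made explicit — subgraphs form the Boolean lattice of stranded-edge subsets, coassociativity reduces to transitivity of contraction $(\tg/\sg')/(\sg/\sg')=\tg/\sg$, and the only 2-graph-specific care (which you correctly flag) is that $\VS_{\tg/\sg}$ is built from external faces, whose chains indeed compose consistently under successive contraction.
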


\begin{proof}
For an associative coalgebra one has to show that 
\[
(\cop \otimes \id)\circ\cop = (\id \otimes \cop) \circ \cop 
\quad \textrm{ and } \quad
(\cou \otimes \id)\circ\cop = (\id \otimes \cou) \circ \cop = \id \, .
\]
The proof is the same as for the 1-graph coalgebra (e.g. \cite{Borinsky:2018} Prop.~5.2.1) and for the bialgebra property \cite{Borinsky:2018}, Prop.~5.2.2.
\end{proof}

In field theory, one is often interested in the subalgebra $\btg_\Vtypeg=\langle \TG(\Vtypeg) \rangle$ of 2-graphs with vertex graphs of specific type $\Vtypeg\in\OG$.
But this is not a subcoalgebra since contractions of 2-graphs in $\btg_\Vtypeg$ might lead to 2-graphs with other vertices, as in the example \eqref{eq:contractionexample1} and \eqref{eq:contractionexample2}.
Thus, in general $\btg_\Vtypeg\in\btg$ is only a right co-ideal, \ie $\cop(\btg_\Vtypeg)\In \btg_\Vtypeg\otimes\btg$.
To upgrade it to a subcoalgebra it is thus necessary to extend the set of 2-graphs by all possible contractions:

\begin{defin}[Contraction closure and subcoalgebras]\label{def:closure}
Given a set $\PG\In\TG$, a set $\KG\In\TG$ is called $\PG$-\cdef{contraction closed} iff for all $\sg\sgr\tg\in\KG$ with $\sg\in\PG$ also $\tg/\sg\in\KG$.

The $\PG$-\cdef{contraction closure} of $\KG\In\TG$ is the extension of $\KG$ by all such contractions,
\[
{}^\PG\overline{\KG} := \left\{\tg=\tg'/\sg \middle| \sg\sgr\tg'\in\KG, \sg\in\PG\right\}
\, .\]
For $\PG=\TG$ one calls the $\TG$-contraction closure of $\KG$ simply the \cdef{contraction closure} and writes $\overline{\KG}$.
\end{defin}

\begin{proposition}[2-graph subbialgebra]\label{prop:subbialgebra}
For a set of 2-graphs $\TG(\Vtypeg)$ of given vertex types $\Vtypeg\In\OG$, the algebra $\langle \overline{\TG(\Vtypeg)}\rangle$ is a subbialgebra of $\btg$.
\end{proposition}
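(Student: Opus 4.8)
The plan is to establish the two defining properties of a subbialgebra of $\btg$ separately: that $\langle\overline{\TG(\Vtypeg)}\rangle$ is a subalgebra and a subcoalgebra. The bialgebra compatibility axioms are then inherited from $\btg$, and the counit is the restriction of $\cou$. The subalgebra property is immediate, since $\langle\overline{\TG(\Vtypeg)}\rangle$ is by construction the unital subalgebra generated by the set $\overline{\TG(\Vtypeg)}$, hence closed under the product $m$ and containing the unit $\one$. (One even checks directly that $\overline{\TG(\Vtypeg)}$ is already closed under disjoint union: if $\tg_i=\tg'_i/\sg_i$ with $\tg'_i\in\TG(\Vtypeg)$ and $\sg_i\sgr\tg'_i$, then $\tg_1\sqcup\tg_2=(\tg'_1\sqcup\tg'_2)/(\sg_1\sqcup\sg_2)$ with $\tg'_1\sqcup\tg'_2\in\TG(\Vtypeg)$.) The entire content therefore lies in proving $\cop\bigl(\langle\overline{\TG(\Vtypeg)}\rangle\bigr)\In\langle\overline{\TG(\Vtypeg)}\rangle\otimes\langle\overline{\TG(\Vtypeg)}\rangle$.

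Because $\cop$ is an algebra homomorphism and $\langle\overline{\TG(\Vtypeg)}\rangle\otimes\langle\overline{\TG(\Vtypeg)}\rangle$ is a subalgebra of $\btg\otimes\btg$, it suffices to verify this on the generators. Thus I would show: for every $\tg\in\overline{\TG(\Vtypeg)}$ and every subgraph $\sg'\sgr\tg$, both $\sg'$ and $\tg/\sg'$ again lie in $\overline{\TG(\Vtypeg)}$; then every term $\sg'\otimes\tg/\sg'$ of $\cop\tg$ is of the required form. To this end, write $\tg=\tg'/\sg$ with $\tg'\in\TG(\Vtypeg)$ and $\sg\sgr\tg'$, which is possible by the definition of the contraction closure. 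Since subgraphs are determined by their (mutually compatible) stranded edge sets and $\E_\tg=\E_{\tg'}\setminus\E_{\sg}$, the subgraphs $\sg'\sgr\tg$ are in bijection with the subgraphs $\sg''$ of $\tg'$ satisfying $\sg\sgr\sg''\sgr\tg'$, via $\E_{\sg''}=\E_{\sg'}\cup\E_{\sg}$ (a disjoint union, and correspondingly for the strand edges $\ES$), so that $\sg'=\sg''/\sg$.

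The key input is then the transitivity of contraction, $(\tg'/\sg)/(\sg''/\sg)=\tg'/\sg''$ for $\sg\sgr\sg''\sgr\tg'$ — the same nesting identity that underlies coassociativity of $\cop$ (cf.\,\cite{Borinsky:2018}, Prop.~5.2.1). It yields $\tg/\sg'=\tg'/\sg''$. Both memberships now follow. First, $\sg''\sgr\tg'$ is a subgraph, so it shares all vertices, half-edges and strand sections with $\tg'$ and differs only in the edge involutions $\ei,\es$; in particular it has the same vertex graphs, $\vgt\sg''=\vgt\tg'\in\pmset{\Vtypeg}$, so $\sg''\in\TG(\Vtypeg)$. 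Consequently $\sg'=\sg''/\sg$ is the contraction of an element of $\TG(\Vtypeg)$, whence $\sg'\in\overline{\TG(\Vtypeg)}$. Second, $\tg/\sg'=\tg'/\sg''$ is the contraction of $\tg'\in\TG(\Vtypeg)$ along the subgraph $\sg''\sgr\tg'$, whence $\tg/\sg'\in\overline{\TG(\Vtypeg)}$ as well. (In particular this shows that $\overline{\TG(\Vtypeg)}$ is itself contraction closed, i.e. the closure is idempotent.)

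I expect the main obstacle to be the combined subgraph-bijection and transitivity step. One must verify from Def.~\ref{def:contraction} that contracting $\sg\sgr\sg''$ inside $\tg'$ and then its residual stranded edges reproduces exactly $\tg/\sg'$ as a 2-graph — in particular that the component projection $\pi$ and the rule turning external faces into the vertex-strand pairing $\VS$ behave consistently under iterated contraction. This is precisely the combinatorial heart already needed for coassociativity, so I would either cite it directly or check it edge-set by edge-set; by comparison, the vertex-type preservation $\vgt\sg''=\vgt\tg'$ is easy, since passing to a subgraph changes neither the $2$-corollae nor the strand pairing $\vs$.
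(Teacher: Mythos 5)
Your proof is correct, but it takes a genuinely different route from the paper's. The paper first \emph{characterizes} the contraction closure explicitly: since contracted vertices are residues of connected components of a subgraph while untouched vertices keep their type, and since no restriction is placed on edges, a 2-graph lies in $\overline{\TG(\Vtypeg)}$ precisely when its skeleton lies in $\res(\TG(\Vtypeg))$, i.e.\ when all its vertex graphs are boundaries of theory graphs. The subcoalgebra property is then read off: a subgraph $\sg\sgr\tg$ has $\skl(\sg)=\skl(\tg)$, so it satisfies the characterization, and the contraction $\tg/\sg$ again has only vertices of residue type. You instead avoid characterizing the closure altogether: you lift a subgraph $\sg'\sgr\tg=\tg'/\sg$ to a nested subgraph $\sg\sgr\sg''\sgr\tg'$ via the edge-set bijection $\E_{\sg''}=\E_{\sg'}\cup\E_\sg$, and invoke transitivity of contraction, $(\tg'/\sg)/(\sg''/\sg)=\tg'/\sg''$, to exhibit both tensor factors, $\sg'=\sg''/\sg$ and $\tg/\sg'=\tg'/\sg''$, as single contractions of elements of $\TG(\Vtypeg)$ (using that passing to a subgraph leaves the vertex graphs unchanged, so $\sg''\in\TG(\Vtypeg)$). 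What the paper's route buys is the explicit description of the closure by vertex types, which is conceptually reused later when building theory spaces and the Hopf algebra of divergent diagrams; its cost is that the converse inclusion (every 2-graph with such vertex types arises as a contraction) is only glossed, as it strictly requires an insertion argument. What your route buys is self-containedness at a price the paper already pays elsewhere: the nesting bijection and transitivity of contraction are exactly the combinatorial content of coassociativity (the 2-graph analogue of Prop.~5.2.1 of \cite{Borinsky:2018}, asserted in the coalgebra proposition), so citing it introduces no new unverified input; and you obtain as a by-product that the closure is idempotent, i.e.\ $\overline{\TG(\Vtypeg)}$ is itself contraction closed, which the paper only gets implicitly. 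Both arguments are sound.
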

\begin{proof}
Let $\tg\in\overline{\TG(\Vtypeg)}$ be a 2-graph in the contraction closure of 2-graphs of vertex types~$\Vtypeg\In\OG$. 
By definition this means that there are 2-graphs $\tilde{\sg}\sgr\ttg\in\TG(\Vtypeg)$ such that $\tg=\ttg/\tilde{\sg}$. In particular, it might be the case that there are connected components $\ttg_i$ and $\tilde{\sg}_j=\ttg_i$ such that $\ttg_i/\ttg_i=\res(\ttg_i)$ is a connected component of $\tg$.
Thus the vertices of $\tg$ can be of any type of residues (respectively boundaries) of $\TG(\Vtypeg)$, that is $\skl(\tg)\in\res(\TG(\Vtypeg))$.
Since there are no restrictions on edges, this completely characterizes 2-graphs in the closure $\overline{\TG(\Vtypeg)}$.

A subgraph $\sg\sgr\tg$ has by definition $\skl(\sg)=\skl(\tg)$.
Thus, it is also in $\overline{\TG(\Vtypeg)}$.
The contraction $\tg/\sg$ is in the closure by definition and thus $\cop\tg\in\overline{\TG(\Vtypeg)}\otimes\overline{\TG(\Vtypeg)}$, which proves that  $\overline{\TG(\Vtypeg)}$ is a subcoalgebra.
As $\overline{\TG(\Vtypeg)}$ is also a subalgebra, it is a subbialgebra.
\end{proof}

\begin{example}[Bialgebra of maps and coloured 2-graphs]\label{ex:bialgebra}
Both the example of combinatorial maps (Ex.\,\ref{ex:matrixdiagrams}) and of $\rk$-coloured graphs (Ex.\,\ref{ex:colouredgraphs}) are classes of 2-graphs which are characterized by a fixed half-edge degree, $\hd = 2$ respectively $\hd=\rk$, as well as a specific edge stranding $\vs$ induced by orientation respectively colour structure.
The first property, $\hd=\rk$, defines a set of vertex graphs $\Vtypeg_\rk\in\OG$.
In this case, since the residue (boundary) of any 2-graph with such vertex type is again of this type, \ie has fixed $\hd=\rk$, the set $\TG(\Vtypeg_\rk)$ is already contraction closed.
The second property of specific edge stranding is also preserved under contractions.
Thus the set of combinatorial maps and the set of $\rk$-coloured 2-graphs each generate a subbialgebra in the bialgebra of all 2-graphs~$\btg$.
In field theory such sets of diagrams are called ``theory space'' exactly because of this closure property.
\end{example}

The quantity of interest in field theory is the perturbative expansion of Green's functions which is a formal power series labelled by Feynman diagrams.
The corresponding combinatorial object therefore is the formal series
\[
X := \sum_{\tg\in\TG} \frac{\tg}{|\aut\tg|}\, ,
\quad \textrm{ or } \quad 
X^\og = \sum_{\tg\in\TGog} \frac{\tg}{|\aut\tg|}, 
\]
that is its restriction to a sum over $\TGog$, \emph{connected} 2-graphs with a given external boundary structure $\br\tg=\og\in\OG$.
The crucial operation for the renormalization of Green's functions is the coproduct on the series $X^\og$ which is induced by the general case $\cop X$~\cite{Kreimer:2006gg, Yeats:2008, vanSuijlekom:2007jv, Borinsky:2014eg}: 

\begin{theorem}[central identity]\label{th:centralidentity}
The coproduct of the weighted series over all 2-graphs is
\[\boxd{
\cop X = \sum_{\tg\in\TG} \Big( \prod_{v\in V_\tg} |\aut\og_v|\, X^{\og_v}\Big) \otimes \frac{\tg}{|\aut\tg|} \, .
}\]
\end{theorem}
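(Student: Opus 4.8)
The plan is to expand the coproduct on generators, repackage the double sum over (graph, subgraph) pairs as a single sum over their isomorphism classes, and then reorganize by the contracted factor, realizing the inverse of contraction as insertion. First I would write out the definition and use that for a fixed labelled $\ttg$ the number of labelled subgraphs in a given isomorphism class equals $|\aut\ttg|/|\aut(\ttg,\sg)|$, where $\aut(\ttg,\sg)\le\aut\ttg$ is the subgroup fixing the subgraph $\sg$ setwise; this gives
\[
\cop X = \sum_{\ttg\in\TG}\frac{1}{|\aut\ttg|}\sum_{\sg\sgr\ttg}\sg\otimes\ttg/\sg
= \sum_{[(\ttg,\sg)]}\frac{1}{|\aut(\ttg,\sg)|}\,\sg\otimes\ttg/\sg \, ,
\]
the last sum running over isomorphism classes of pairs. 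Now fix the right-hand factor $\tg$ and collect all classes with $\ttg/\sg\cong\tg$. By Def.~\ref{def:contraction} the vertices of $\ttg/\sg$ are exactly the connected components of $\sg$, so giving such a pair is the same as choosing, for each vertex $v\in\V_\tg$, a connected 2-graph $\sg_v$ to sit at $v$, together with a gluing of the external half-edges and strands of $\sg_v$ onto those at $v$. Since contracting the component at $v$ to a point must return the vertex graph $\og_v$, the residue--boundary correspondence forces $\br\sg_v=\og_v$; hence the admissible pieces at $v$ are precisely the connected 2-graphs counted by $X^{\og_v}$.

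It then remains to perform the symmetry-factor count, which is the heart of the proof and runs exactly parallel to the 1-graph case (cf.\ \cite{Borinsky:2018}). I would fix labelled representatives of $\tg$ and of one piece per isomorphism class; a gluing at $v$ is an isomorphism of 1-graphs $\br\sg_v\xrightarrow{\sim}\og_v$, and since $\br\sg_v\cong\og_v$ the set of these is an $\aut\og_v$-torsor, of cardinality $|\aut\og_v|$. The group $A:=\aut\tg\ltimes\prod_{v\in\V_\tg}\aut\sg_v$, of order $|\aut\tg|\prod_v|\aut\sg_v|$, acts on the set of all gluing data (the factor $\aut\tg$ permuting vertices carrying isomorphic pieces and the factors $\aut\sg_v$ rotating the insertions), and its orbits are the isomorphism classes of the glued pair $(\ttg,\sg)$, with stabilizers $\aut(\ttg,\sg)$. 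Orbit--stabilizer therefore gives, for each choice of component types,
\[
\sum_{\substack{[(\ttg,\sg)]\\ \ttg/\sg\cong\tg}}\frac{1}{|\aut(\ttg,\sg)|}
= \frac{\prod_{v}|\aut\og_v|}{|A|}
= \frac{1}{|\aut\tg|}\prod_{v\in\V_\tg}\frac{|\aut\og_v|}{|\aut\sg_v|} \, ,
\]
and summing this over all choices of the (connected) pieces $\sg_v$ reassembles $\sg=\bigsqcup_v\sg_v$ and reproduces $\frac{1}{|\aut\tg|}\prod_v|\aut\og_v|X^{\og_v}$. Substituting this coefficient into the reorganized sum and moving the factor $1/|\aut\tg|$ onto the right tensor slot yields the claimed identity.

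The main obstacle I anticipate is justifying the order of $A$ and the orbit--stabilizer step, because $\aut\tg$ generally permutes vertices of $\tg$ whose vertex graphs $\og_v$ and assigned pieces $\sg_v$ are isomorphic, so $A$ is a genuine wreath-type semidirect product rather than a direct product; one must check that its action on the gluing torsors is free enough that the orbit count is exactly $\prod_v|\aut\og_v|/|A|$, with no residual overcounting when identical pieces are swapped. A second, related subtlety is that a component $\sg_v$ may have a \emph{disconnected} boundary $\og_v=\br\sg_v$, i.e.\ produce a multi-trace vertex; then $|\aut\og_v|$ must be the automorphism group of the full vertex graph (including permutations of isomorphic boundary components), and keeping track of which boundary components belong to $\sg_v$ is exactly the information retained by the bijection $\vgb$ rather than $\vgp$ of Rem.~\ref{rem:vertexgraphs}. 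Once these points are settled, the whole computation reduces, for point-like vertices with $\og_v$ a set of $n$ legs, to the local field theory identity with $|\aut\og_v|=n!$, which is the sense in which the formula generalizes the local case.
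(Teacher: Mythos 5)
Your proof is correct, and the two subtleties you flag are exactly the right ones; the overall strategy --- realizing the inverse of contraction as a component-wise gluing and doing automorphism bookkeeping --- is the paper's, but your counting mechanism is genuinely different and worth comparing. The paper never passes to isomorphism classes of pairs: its key lemma is a bijection (generalizing Lemma~5.3.1 of \cite{Borinsky:2018}) between triples $(\sgl',j_1,j_2)$, with $\sgl'\sgr\tgl$, $j_1:\sgl\to\sgl'$, $j_2:\tgl/\sgl'\to\tilde{\tgl}$, and pairs $(i,j)$ of an insertion place $i\in\ins{\sgl}{\tilde{\tgl}}$ and an isomorphism $j:\tilde{\tgl}\circ_i\sgl\to\tgl$, giving the counting identity $|\{\sgl'\sgr\tgl \mid \sgl'\tsim\sgl,\ \tgl/\sgl'\tsim\tilde{\tgl}\}|\,|\aut\sgl|\,|\aut\tilde{\tgl}| = |\{i \mid \tilde{\tgl}\circ_i\sgl\tsim\tgl\}|\,|\aut\tgl|$; summing over the glued graph then leaves the total insertion count $|\ins{\sg}{\ttg}|$, evaluated in \eqref{eq:numberofinsertions} as $\prod_{\og}V^{\og}_{\ttg}!\prod_{v}|\aut\og_v|$, whose factorials cancel against the multiset overcounting $\prod_\og V^\og_\ttg!$ hidden in the expansion of $\prod_v X^{\og_v}$. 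You instead apply orbit--stabilizer twice (once to get the $1/|\aut(\ttg,\sg)|$-weighted sum over pair classes, once to count gluings per assignment), and since you sum assignments as functions on $\V_\tg$, the factorials never appear explicitly. The two mechanisms are equivalent: the stabilizer of a gluing datum under your group $A$ is exactly $\aut(\ttg,\sg)$, which is in substance the paper's bijection. Your worry about the wreath structure resolves as you suspect: when an element of $\aut\tg$ moves an assignment of non-isomorphic pieces, one either restricts to the assignment-preserving subgroup of $\aut\tg$ (the sum over the $\aut\tg$-orbit of assignments then restores exactly the missing index) or lets $A$ act on assignment-plus-gluing data; both give the coefficient $\prod_v|\aut\og_v|\big/\big(|\aut\tg|\prod_v|\aut\sg_v|\big)$ per assignment, and distinct $\aut\tg$-orbits of assignments produce disjoint sets of pair classes, so there is no residual overcounting. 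Your second subtlety is precisely the paper's component-sensitivity requirement on insertions and the $\vgb$-versus-$\vgp$ distinction of Rem.~\ref{rem:vertexgraphs}: a connected piece must match its entire, possibly disconnected, boundary to the full vertex graph $\og_v$, which is what keeps insertion invertible by contraction and is the one place where 2-graphs genuinely differ from 1-graphs. What the paper's organization buys is the explicit, reusable insertion count \eqref{eq:numberofinsertions}; what yours buys is that the symmetry factors assemble with no explicit factorial bookkeeping.
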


The same formula is also true for more restricted formal power series where the sum runs over a subset $\KG\In\TG$ such as $X^\og$, the sum over 2-graphs with specific boundary $\og$, or the restriction to bridgeless (one-particle irreducible) 2-graphs.
The necessary property for such a subset $\KG$ is to be contraction closed (\cite{Borinsky:2018} Corollary 5.3.1).

To prove the theorem one needs the inverse operation to contraction which is insertion.
However, one has to be careful with the definition since for 2-graphs not any kind of insertion is dual to a contraction.
To insert a labelled 2-graph $\tgla$ into another $\tglb$ appropriately, one replaces all vertices of $\tglb$ with connected components of $\tgla$.
For this it is necessary that the boundaries of the components of $\tgla$ agree with (are isomorphic to) the vertex graphs of $\tglb$, 
that is $\brt \tgla \osim \vgt\tglb$, 
and to specify how edges and strands of $\tglb$ connect to $\brt\tgla$:
\begin{defin}[insertion] 
Let $\tgla$ and $\tglb$ be 2-graphs 
and  $i=(i_{\H},i_{\S}):\br \tgla\to \vgs\tglb$ a 1-graph isomorphism which is \cdef{component sensitive}, that is 
\[
\textrm{for every }g\in\brt\tgla \textrm{ there is a } g'\in\vgt\tglb \textrm{ such that } i(g)=g'. \nonumber
\]
Then the \cdef{insertion} of $\tgla$ into $\tglb$ along $i$ is
\[
\tglb\circ_{i}\tgla := \left( \V_\tgla, \H_\tgla, \v_\tgla, \E_\tgla\cup i_\H^{-1}(\E_{\tglb}) ; \h_\tgla,\S_\tgla,\VS_\tgla, \ES_\tgla\cup i_\S^{-1}(\ES_{\tglb}) \right).
\] 
One denotes the set of possible insertions of $\tgla$ into $\tglb$, \ie the set of  component-sensitive 1-graph isomorphisms $i$, as $\ins \tgla{\tglb}$.
\end{defin}

\begin{remark}
The number of possible insertions is the number of component-sensitive \mbox{1-graph} automorphisms. 
As each such isomorphism is already fixed by the specific element in the orbit which it is mapping to, the only choice left is an additional automorphism. 
Thus one has
\[\label{eq:numberofinsertions}
|\ins \tgla{\tglb}| = |\aut\brt\tgla| = |\aut\vgt\tglb| 
=\prod_{\ogl\in\OG}V_{\tglb}^{\ogl}! \prod_{v\in\V_{\tglb}} |\aut\vg|
\]
where $V_{\tglb}^{\ogl}$ is the number of vertices in $\tglb$ with vertex graph $\ogl$.

Dropping the restriction to component sensitivity would allow to insert a 2-graph $\tglb$ into $\tgla$ with $\br\tgla\osim\vgs\tglb$ but $\brt\tgla\ncong\vgt\tglb$.
For example, a 2-graph with two boundary components such as $\cfish$ from \eqref{eq:r4fishgraph} with distinguished colours $c_1\ne c_2$ could then be inserted not only into a vertex of its residue type, the multi-trace vertex $\cvft$, but also in two copies of the vertex $\cvt$. 
While this is definitely an interesting operation from a topological point of view (basically the operation of adding a handle), it cannot be inverted in terms of a contraction. This is because it involves two vertices but contraction yields by definition only one vertex per connected component.
\end{remark}

With this crucial concept of component sensitivity the proof of the central identity is basically the same as for usual graphs (see e.g. \cite{Borinsky:2018}):
\begin{proof}
One can expand the coproduct of a 2-graph $\tg$ which is a sum over contractions of subgraphs equivalently as a sum over pairs of 2-graphs $(\sg,\ttg)$ whose insertion of one into the other yields $\tg$,
\[
\cop\tg \equiv \sum_{\sg'\sgr\tg}\sg'\otimes\tg/\sg' 
= \sum_{\sg,\ttg\in\TG} \left|\{\sg'\sgr\tg | \sg'\tsim\sg \textrm{ and } \tg/\sg'\tsim \ttg\} \right|\, \sg\otimes\ttg \, .
\]
How many such pairs $(\sg, \ttg)$ are there? 
On the level of their representatives, that is for labelled 2-graphs $\sgl,\tgl, \tilde{\tgl}$, the set of triples $(\sgl',j_1,j_2)$ of subgraphs $\sgl'\sgr\tgl$ with isomorphisms $j_1:\sgl\to\sgl'$ and $j_2:\tgl/\sgl'\to\tilde{\tgl}$ is isomorphic to the set of pairs $(i,j)$ of an insertion place $i\in\ins{\sgl}{\tilde{\tgl}}$ and an isomorphism $j:\tilde{\tgl}\circ_i \sgl\to\tgl$.%
\footnote{
The argument works along the lines of Lemma 5.3.1 of \cite{Borinsky:2018} with a straightforward generalization from 1-graphs to 2-graphs:
From $(\sgl',j_1,j_2)$ one can construct an insertion place $i'\in\ins{\sgl'}{\tgl/\sgl'}$ which is isomorphic to $i\in\ins{\sgl}{\tilde{\tgl}}$ via $j_1$ and $j_2$. 
Furthermore this gives an isomorphism $j:\tilde{\tgl}\circ_i \sgl\to\tgl$ since $\tgl/\sgl'\circ_{i'}\sgl'=\tgl$.
The other way, from $(i,j)$ one has 
$\sgl\sgr\tilde{\tgl}\circ_i \sgl$ 
such that there must be a unique $\sgl'\sgr j(\tilde{\tgl}\circ_i \sgl)=\tgl$ with an induced isomorphism $j_1$ such that $j_1(\sgl)=\sgl'$.
Contraction on both sides induces further the isomorphism $j_2$.
}
Therefore the cardinality of the two sets agrees,
\[
|\{\sgl'\sgr\tgl | \sgl'\tsim\sgl \textrm{ and } \tgl/\sgl'\tsim \tilde{\tgl}\}| \cdot|\aut\sgl| \cdot|\aut\tilde{\tgl}|
= |\{i\in\ins{\sgl}{\tilde{\tgl}}|\tilde{\tgl}\circ_i \sgl\tsim\tgl\}| \cdot|\aut\tgl| \, .
\]
Using this identity again on the level of unlabelled 2-graphs one finds
\begin{align}
   \cop X &= \sum_{\tg\in\TG}\frac1{|\aut\tg|} \sum_{\sg,\ttg\in\TG} \frac{|\{i\in\ins{\sg}{\ttg}|\ttg\circ_i \sg\tsim\tg\}|\,|\aut\tg|}{|\aut\sg|\,|\aut\ttg|} \sg\otimes\ttg 
\end{align}
One can change the sums and use for given $\sg, \ttg\in\TG$ the simple fact
\[
\sum_{\tg\in\TG} |\{i\in\ins{\sg}{\ttg}|\ttg\circ_i \sg\tsim\tg\}| = |\ins{\sg}{\ttg}|
\]
to find
\[
\cop X  = \sum_{\sg,\ttg\in\TG}|\ins{\sg}{\ttg}|\frac\sg{|\aut\sg|} \otimes \frac\ttg{|\aut\ttg|}
        = \sum_{\sg,\tg \in\TG}|\ins{\sg}{\tg} |\frac\sg{|\aut\sg|} \otimes \frac\tg {|\aut\tg|}\,
\]
(where the second step is simply a relabelling).

The symmetry-weighted sum over all 2-graphs $\sg$ with a suitable structure of connected components to be inserted into a 2-graph $\tg$ is 
$\prod_{v\in V_\tg}X^{\og_v}/\prod_{\og\in\OG}V_\tg^\og!$ 
in which the denominator takes care of implicit automorphisms between isomorphic 2-graphs with the same boundary. Inserting the number of insertions \eqref{eq:numberofinsertions} one has finally
\[
\cop X = \sum_{\tg\in\TG} \prod_{\og\in\OG}V_{\tg}^{\og}! \prod_{v\in\V_{\tg}} |\aut\og_v| \frac{\prod_{v\in V_\tg}X^{\og_v}}{\prod_{\og\in\OG}V_\tg^\og!} \otimes \frac\tg{|\aut\tg|}
\]
which concludes the proof.
\end{proof}

\section{The Hopf algebra of 2-graphs and renormalization in cNLFT}\label{sec:hopfalgebra}

In this section I show that for every renormalizable combinatorially non-local field theory there is a Connes-Kreimer type Hopf algebra of divergent Feynman diagrams \cite{Connes:2000km,Connes:2001hk,Kreimer:2002br}.
Besides the specific external structure of such diagrams being 2-graphs with a 1-graph boundary, one can use the same logic to construct these algebras as for local field theory \cite{Borinsky:2018}:
There is a general Hopf algebra of all 2-graphs which can be restricted to the specific set of diagrams of a given theory and further to the subset of divergent diagrams.
In the end I will illustrate how this works in the example of matrix field theory and for field theories with tensorial interactions.

\

By the same arguments as for 1-graphs, the bialgebra of 2-graphs $\btg$ has a unique coinverse element and is thus a Hopf algebra.
From the renormalization perspective, the important insight in this is that there is group structure on the set of algebra automorphisms on $\btg$ which extends also to algebra homomorphisms $\phi, \psi:\btg\to\uca$ mapping to any other unital commutative algebra~$\uca$.
The multiplication of such homomorphisms is given by the \emph{convolution product}
\[
\phi \conp \psi := m_{\uca} \circ (\phi\otimes\psi) \circ \cop
\]
which due to (co)associativity of $m_\uca$ and $\cop$ is also associative 
and has the neutral element $u_\uca\circ\cou_\btg$.
In particular, $\uca$ can be the algebra of amplitudes of a field theory and $\phi$ the evaluation of the amplitude labelled by a given Feynman diagram $\tg\in\btg$.
For renormalization, the object of interest is then the group inverse of this evaluation map of diagrams.

\begin{theorem}[Hopf algebra of 2-graphs/group of algebra homomorphisms]\label{th:hopfalgebra}
The bialgebra of 2-graphs $\btg$ is a Hopf algebra, i.e.
there exists an \emph{antipode} $\anti$, that is a unique inverse to the identity $\id:\tg\mapsto\tg$  with respect to the convolution product,
\[
\anti \conp \id = \id \conp \anti = u \circ \cou \, .
\]
Furthermore, the set $\Phi_{\uca}^{\btg}$ of algebra homomorpisms from $\btg$ to a unital commutative algebra~$\uca$ is a group with inverse $\anti^{\phi} = \phi\circ S$ for every $\phi\in\Phi_{\uca}^{\btg}$,
\[\label{eq:groupinverse}
\anti^\phi \conp \phi = \phi \conp \anti^\phi = u_\uca \circ \cou_\btg .
\]
The subbialgebra $\langle\overline{\TG(\Vtypeg)}\rangle$ generated by 2-graphs with specific vertex graphs $\Vtypeg\In\OG$ (Prop. \ref{prop:subbialgebra}) is a Hopf subalgebra of $\btg$.
\end{theorem}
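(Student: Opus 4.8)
The plan is to obtain all three assertions from the standard structure theory of graded bialgebras, following the 1-graph case of \cite{Borinsky:2018} wherever possible. First I would grade $\btg$ by the number of (internal, stranded) edges, $\btg_n:=\langle\{\tg\in\TG:|\E_\tg|=n\}\rangle$. Disjoint union adds edge numbers, so this is an algebra grading; and since every subgraph satisfies $\E_\sg\sqcup\E_{\tg/\sg}=\E_\tg$ by Def.~\ref{def:contraction}, the coproduct is graded, $\cop(\btg_n)\In\bigoplus_{p+q=n}\btg_p\otimes\btg_q$, with $u$ and $\cou$ respecting the grading. Hence $\btg$ is a graded bialgebra, and it remains to produce the antipode.

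The delicate point---and the one I expect to be the main obstacle---is connectedness. The degree-zero part is $\btg_0=\langle\Res^*\rangle$, and each residue $r$ is grouplike: its only subgraph is the skeleton $\skl(r)=r$ and $r/r=\res(r)=r$, so $\cop r=r\otimes r$, while $\cou r=1$. Thus $\btg_0\neq\Q\one$ and $\btg$ is not connected in the naive sense; concretely the extreme terms of the coproduct read $\cop\tg=\skl(\tg)\otimes\tg+\tg\otimes\res(\tg)+(\text{middle})$ rather than $\one\otimes\tg+\tg\otimes\one$. I would resolve this exactly as for 1-graphs: the grouplike residues act as left/right local units for $\cop$, and the counit ($\cou\equiv1$ on $\Res^*$) encodes their identification with the neutral element, so that the associated graded bialgebra is connected and the standard antipode theorem applies.

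With connectedness in hand I would construct $\anti$ recursively in the grading, $\anti(r)=r$ on the grouplike residues and, for $|\E_\tg|>0$,
\[
\anti(\tg)=-\sum_{\substack{\sg\sgr\tg\\ \sg\neq\tg}}\anti(\sg)\,(\tg/\sg)\,,
\]
which is well defined since the terms $\tg/\sg$ carry strictly fewer edges, and extended multiplicatively by $\anti(\tg_1\sqcup\tg_2)=\anti(\tg_1)\,\anti(\tg_2)$. That $\anti\conp\id=\id\conp\anti=u\circ\cou$ then follows by induction on the grading; uniqueness is automatic, a two-sided convolution inverse in the monoid $(\mathrm{End}\,\btg,\conp)$ being unique. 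For the group of homomorphisms I would set $\anti^\phi:=\phi\circ\anti$ and compute, using that $\phi$ is an algebra homomorphism,
\[
\phi\conp\anti^\phi=m_\uca\circ(\phi\otimes\phi)\circ(\id\otimes\anti)\circ\cop=\phi\circ m\circ(\id\otimes\anti)\circ\cop=\phi\circ u\circ\cou=u_\uca\circ\cou_\btg\,,
\]
and symmetrically on the other side; associativity of $\conp$ (from coassociativity of $\cop$ and commutativity of $\uca$) together with the neutral element $u_\uca\circ\cou_\btg$ makes $\Phi_\uca^\btg$ a group.

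Finally, for the Hopf subalgebra claim I would use that $\langle\overline{\TG(\Vtypeg)}\rangle$ is already a subbialgebra (Prop.~\ref{prop:subbialgebra}) and that the set $\overline{\TG(\Vtypeg)}$ is closed both under passing to subgraphs (a subgraph shares the skeleton, hence stays in the set) and under contraction (by the very definition of the contraction closure, Def.~\ref{def:closure}). Consequently the recursion defining $\anti$ never leaves $\langle\overline{\TG(\Vtypeg)}\rangle$, so the antipode restricts and the subbialgebra is a Hopf subalgebra. The only genuinely non-formal step in this whole programme is the treatment of the grouplike residues in the second paragraph; everything else is the standard graded-connected-bialgebra machinery transported from 1-graphs to 2-graphs.
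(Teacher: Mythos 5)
You correctly isolate the crux---$\btg_0=\langle\Res^*\rangle$ is spanned by group-like residues, so $\btg$ is not connected---but your resolution of that obstacle does not work, and the two concrete formulas you build on it are false. First, setting $\anti(r)=r$ on a group-like residue contradicts the antipode axiom: $m\circ(\anti\otimes\id)\circ\cop(r)=\anti(r)\,r$ must equal $u\circ\cou(r)=\one$, so one needs $\anti(r)=r^{-1}$; but in the free commutative algebra $\btg$ the generator $r$ has no multiplicative inverse. Second, your recursion $\anti(\tg)=-\sum_{\sg\sgr\tg,\,\sg\neq\tg}\anti(\sg)\,(\tg/\sg)$ solves the wrong equation: since the extreme terms of $\cop\tg$ are $\skl(\tg)\otimes\tg$ and $\tg\otimes\res(\tg)$ (as you yourself note), the antipode axiom in degree $|\E_\tg|>0$ reads
\[
\anti(\tg)\,\res(\tg)=-\sum_{\substack{\sg\sgr\tg\\ \sg\neq\tg}}\anti(\sg)\,(\tg/\sg)\,,
\]
and extracting $\anti(\tg)$ requires multiplying by $\res(\tg)^{-1}$, which again does not exist. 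Indeed, with your definitions one computes $m\circ(\anti\otimes\id)\circ\cop(\tg)=\anti(\tg)\,\res(\tg)-\anti(\tg)=\anti(\tg)\,(\res(\tg)-\one)\neq0=u\circ\cou(\tg)$, so $\anti\conp\id\neq u\circ\cou$. The appeal to ``local units'' and ``the associated graded bialgebra'' is precisely the step that cannot be left vague: connectedness genuinely fails, and no standard theorem applies to $\btg$ as it stands.

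The paper closes this gap by an explicit modification of the algebra (following Borinsky): it augments $\btg$ by formal inverses $r^{-1}$ for all $\one\neq r\in\Res^*$, with $r\,r^{-1}=r^{-1}r=\one$, sets $\anti_{\Res^*}(r):=r^{-1}$ on residues, and then obtains the antipode as the convolution-geometric series
\[
\anti:=\sum_{k\ge 0}\left(u\circ\cou-\anti_{\Res^*}\right)^{\conp k}\conp\anti_{\Res^*}\,,
\]
which is well defined because the edge grading makes the sum finite in each degree. (The alternative route it mentions is to reduce to a connected graded bialgebra in which the degree-zero part is one-dimensional; either way one must genuinely enlarge or collapse $\btg$ before any connected-graded machinery applies.) Your treatment of the convolution group $\Phi^{\btg}_{\uca}$ and of the Hopf-subalgebra claim coincides in substance with the paper's and would be fine once an antipode exists; but as written, your construction of $\anti$ itself---the first and central claim of the theorem---fails.
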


\begin{proof}
The 1-graph structure contained in the 2-graphs is already sufficient and there are two standard ways to prove the unique existence of the antipode \cite{Manchon:2004vi} which both use the fact that the bialgebra $\btg$ is a graded bialgebra $\btg=\bigoplus_{E\ge0} \btg_E$ with respect to the number of edges (which follows directly from Def. \ref{def:contraction}).
The common first option is to
{reduce to the augmentation ideal 
$\textrm{Ker}\cou = \{\one\}\oplus\bigoplus_{E\ge1} \btg_E$} 
which is a connected ($\btg_0$ is one-dimensional) 
graded bialgebra, and thus automatically a Hopf algebra (Sec.\,III.3 in \cite{Manchon:2004vi}).
Here we use the second construction following \cite{Borinsky:2018} where one keeps all the residues $r\in\Res^*$ which behave as \emph{group-like elements} w.r.t. the coproduct \cite{Manchon:2004vi} , $\cop r=r\otimes r$, and augment $\btg$ by formal inverses $r^{-1}$ for all $\one\ne r\in\Res^*$ defining $r^{-1} r = r\,r^{-1}=\one$.
Then the antipode on $r\in\Res^*$ is $\anti_{\Res^*}(r):=r^{-1}$ from which one can construct the antipode on $\btg$ as 
\[
\anti:=\sum_{k\ge 0} \left(u\circ\cou-\anti_{\Res^*} \right)^{\conp k}\conp \anti_{\Res^*} \, 
\]
as proven for example in \cite{Borinsky:2018}.

For the inverse of a homomorphism $\phi:\btg\to\uca$ one directly calculates
\[
\anti^\phi \conp \phi = m_\uca\circ((\phi\circ\anti)\otimes\phi)\circ\cop = \phi\circ m_\uca\circ(\anti\otimes\id)\circ\cop = \phi\circ u_\btg\circ\cou_\btg = u_\uca\circ\cou_\btg\,.
\]
Finally, by definition of the antipode $m\circ(\id\otimes\anti)\circ\cop(\langle\overline{\TG(\Vtypeg)}\rangle)=u\circ\cou(\langle\overline{\TG(\Vtypeg)}\rangle)$. 
It follows that 
$\langle\overline{\TG(\Vtypeg)}\rangle \anti(\btg) = \anti(\langle\overline{\TG(\Vtypeg)}\rangle)\btg$ 
and thus
$\anti(\langle\overline{\TG(\Vtypeg)}\rangle)\In \langle\overline{\TG(\Vtypeg)}\rangle$.
\end{proof}

Connes-Kreimer-type Hopf algebras for specific renormalizable field theories follow from the general Hopf algebra of 2-graphs by a further restriction.
With Thm.\,\ref{th:hopfalgebra} one has already a Hopf subalgebra for a specific class of 
Feynman diagrams as for example (Ex.\,\ref{ex:bialgebra}) combinatorial maps 
in matrix theories or $\rk$-coloured 2-graphs 
in tensor-invariant theories.
One can always further restrict to one-particle irreducible (bridgeless) 2-graphs as a quotient Hopf algebra dividing out the Hopf ideal generated by one-particle reducible 2-graphs (\cite{Borinsky:2018} Ex.\,5.5.1).
For renormalization one restricts then further to the subset of superficially divergent diagrams which will turn out to be contraction-closed by definition, and thus will also be a Hopf algebra.

Concerning the structure of renormalization one can consider quantum field theory from a purely combinatorial perspective \cite{Cvitanovic:1983ta, Yeats:2008, Yeats:2017}.
In this spirit, a local quantum field theory is specified by its field content, the interactions and their weights as well as the spacetime dimension~$\std$ \cite{Yeats:2008,Borinsky:2018}.
For combinatorially non-local field theory
let us restrict here to a single field type (of which there could still be various, for example with different number $\hd$ of arguments \cite{Oriti:2015kv}). 
Then a combinatorially non-local scalar field theory is simply given by its interactions, their weights and a dimension%
\footnote{
This dimension $\gd$ is not the dimension of spacetime but simply the dimension of a single argument of the field.
It can have various meaning depending on the specific theory.
For example, a matrix field theory derived from $\std$-dimensional non-commutative field theory has $\gd=\std/2$ \cite{Grosse:2004bm, Grosse:2005ec,Hock:2020tg}.
In group field theory, $\gd$ is the dimension of the group $G$ which the field arguments are valued in and $G$ is related to the Lorentz group if the theory has an interpretation as a quantum theory of gravity \cite{DePietri:2000dt,Oriti:2003uw,Freidel:2005jy,Carrozza:2013uq}.
}:

\begin{defin}[comb. field theory]
A {combinatorial cNLFT} $T=(\OG^e,\OG^v,\w,\gd)$ consists of a dimension $\gd\in\N$, boundary graph sets $\OG^e\In\OG^v\In\OG$ and a weight map
\[
\w : \OG^e\cup\OG^v \to \Z
\]
where $\OG^v$, or equivalently $\Res^v\In\Res\In\TG$ under the bijection $\vgb$, is the set of interactions and $\OG^e$ (or $\Res^e$) is the set of propagators which are graphs with two vertices (or respectively 2-valent 2-graph vertices)%
\footnote{
In field theory it is usually not necessary to have the two-valent interactions included ($\Res^e\In\Res^v$);
Without restriction to the underlying theory, these are added here for the simple formula of contraction according to Def.\,\ref{def:contraction} which necessarily leads to such a vertex when contracting a propagator-type subgraph~\cite{Borinsky:2018}.
}.
Its Feynman diagrams are all 2-graphs with $\OG^v$-type vertices,  
\[
\TG^T :=\TG(\OG^v) = 
\left\{\tg\in\TG  \middle| \vgt\tg\in \pmset{\OG^v} \right\} \In \TG \,,
\]
and they generate a Hopf algebra $\btg_T:=\langle \overline{\TG^T} \rangle$ by closure with respect to contractions according to Thm.\,\ref{th:hopfalgebra}.
\end{defin}

\begin{defin}[renormalizability and the Hopf algebra of Feynman 2-graphs]
Let $T=(\OG^e,\OG^v,\w,\gd)$ be a combinatorial cNLFT. 
For a $\tg\in\TG^T$, the \cdef{superficial degree of divergence} is
\[\label{eq:superficialdegree}
\sdd(\tg) = \sum_{v\in\V_\tg} \w(\og_v) - \sum_{e\in\E_\tg} \w(\og_e) + \gd\cdot\nf
\]
where $\nf=|\Fi|$ is the number of internal faces (Def.\,\ref{def:faces}), 
$\og_v$ is the vertex graph of the vertex $v$, \eqref{eq:vertexgraph}, and $\og_e$ the ``vertex graph'' of the edge $e$, defined as the 1-graph with two vertices and as many edges between them as there are strands at the edge $e$ in the 2-graph~$\tg$.
The set of superficially divergent diagrams in $T$ are bridgeless 2-graphs whose nontrivial components are superficially divergent,
\[
\psd_T := \left\{\tg=\bigsqcup{}_{i\in I} \tg_i \in\TG^T  \textrm{ bridgeless } \middle| \forall i\in I: \textrm{if }\tg_i\not\in\Res \textrm{ then } \sdd(\tg_i)\ge 0 \right\} \, .
\]
The cNLFT $T$ is \emph{renormalizable} iff $\sdd$ depends only on the boundary of 2-graphs and for every connected $\tg\in\TG^T$: 
\[
\sdd(\tg) = \w(\br\tg) \, .
\]
Then $\hfd_T = \langle\psd_T \rangle$ is the Connes-Kreimer Hopf algebra of divergent Feynman 2-graphs of the theory $T$.
It is a Hopf subalgebra of $\btg_T$ and $\btg$ according to Prop.\,\ref{prop:subbialgebra} and Thm.\,\ref{th:hopfalgebra}
as it is contraction closed due to renormalizablity.
\end{defin}

This is the combinatorial structure of renormalization.
To perform actual renormalization of a cNLFT, a renormalization scheme has to be specified by choosing an appropriate linear operator $\rota$ on the algebra of amplitudes $\uca$ on which diagrams are mapped by the homomorphism $\phi:\hfd_T\to\uca$.
Based on the existence of the antipode $\anti$ this allows to recursively define the counterterm map 
\begin{align}
\anti^\phi_\rota: \hfd\to\uca ,\quad 
\anti^\phi_\rota(\one) &= \one ,\quad
\anti^\phi_\rota(\Gamma) = -\rota\phi(\tg) - \rota\bigg[ \sum_{\sg\sgr\tg}\anti^\phi_\rota(\sg)\phi(\tg/\sg)\bigg]
\end{align}
which is in the group $\Phi^{\hfd}_\uca$ if $\rota$ is a Rota-Baxter operator \cite{Manchon:2004vi}.
Then the renormalized Feynman amplitudes are evaluated by $\anti^\phi_\rota\conp\phi$.
I exemplify how this works for a cNLFT in practice in \cite{Thurigen:2021wn} and close this work with two examples of renormalizable cNLFT from the purely combinatorial perspective.


\begin{example}[Hopf algebra of matrix field theory/Grosse-Wulkenhaar model]
\label{ex:matrixfieldtheory}
The Feynman diagrams of matrix field theories \cite{Grosse:2005ec,Wulkenhaar:2019,Hock:2020tg} are combinatorial maps which are 2-graphs with bivalent edges, $\hd=2$ (Ex.\,\ref{ex:matrixdiagrams}). 
Thus, the interaction vertex graphs can be polygons with $n$ vertices and disjoint unions thereof (multi-trace vertices). 

So called \emph{$\phi_\std^n$ matrix field theory} is specified by $\OG^e=\{\edgevg\}$, a set of interactions $\OG^v$ 
of maximal order $n$ and a dimension $d=\std/2$ which can be derived as the spectral dimension of an external matrix (the kinetic operator in the action) \cite{Wulkenhaar:2019, Hock:2020tg} and relates to the spacetime dimension $\std$ of a corresponding non-commutative field theory for even~$\std$~\cite{Grosse:2005ec}.
Let the weights be $\w|_{\OG^e} = 1$ and $\w|_{\OG^v\setminus\OG^e} = 0$.

According to \eqref{eq:superficialdegree} the superficial degree of divergence is  $\sdd(\tg) = \gd\cdot\nf_\tg - \nei_\tg$ where
$\nei_\tg=|\E_\tg|$ is the number of (internal) edges and 
$\nf_\tg=|\Fi_\tg|$ is the number of internal faces.
The Euler formula for such a 2-graph $\tg$ (dual to a triangulated surface) is
\[\label{eq:Eulerformula}
2-2g_\tg-\nc_{\br\tg}  = \nv_\tg - (\nei_\tg+\nv_{\br\tg}) +\nf_\tg +\nei_{\br\tg} 
= \nv_\tg - \nei_\tg+ \nf_\tg 
\]
with genus~$g_\tg$ and $\nc_{\br\tg}$ the number of connected components of the boundary.
The number of boundary vertices/external edges $\nv_{\br\tg}=|\V_{\br\tg}|=|\Ex_\tg|$ and boundary edges/external faces $\nei_{\br\tg}=|\E_{\br\tg}|=|\Fx_\tg|$ agree 
since the boundary graphs are polygons. 
Together with the relation 
\[\label{eq:edgevertexrelation}
2\nei_\tg+\nv_{\br\tg} = \sum_{\og\in\OG^v} \nv_\og \cdot \nv_\tg^\og
= \sum_{k=1}^n k \nv^{(k)}_\tg
\]
between edges and the number $\nv_\tg^\og$ of interactions in $\tg$ with $\nv_\og$ boundary vertices, or $\nv^{(k)}_\tg$ interactions of order $\nv_\og=k$, one obtains
\[\label{eq:divergencedegreematrix}
\sdd(\tg) \equiv \gd \nf_\tg - \nei_\tg
= -\gd(\nv_\tg-1) + \frac{\gd-1}{2}\left(\sum_{k=1}^n k \nv^{(k)}_\tg - \nv_{\br\tg} \right)
-\gd\left(2g_\tg + \nc_{\br\tg}-1\right) \, .
\]
For example, the Grosse-Wulkenhaar model, a $\phi_\std^4$ matrix field theory (only $V^{(4)}_\tg\ne0$) related to $\std=2\gd$ dimensional non-commutative field theory, has \cite{Grosse:2004ey}
\[
{2}\sdd(\tg) = 
\std - \frac{\std-2}{2}\nv_{\br\tg} + (\std-4)\nv_\tg - \std(2g_\tg + \nc_{\br\tg} -1 ) \, .  
\]
For $\std=4$ the divergence degree is independent of $\nv_\tg$ and only planar maps $\tg$, that is genus $g_\tg=0$, with single boundary component $\nc_{\br\tg}=1$ and a maximal number of $\nv_{\br\tg}\le4$ boundary vertices/external edges can be divergent.
Then, divergence depends only on the external structure of diagrams and $\phi_4^4$ matrix theory is renormalizable. 
In particular, the only Green's functions which need renormalization are the effective propagator (2-point function) and the regular ($\nc_{\br\tg}=1$) 4-point function.
This defines the set of superficially divergent diagrams $\psd_\textsc{gw}$ 
which generates the Connes Kreimer Hopf algebra of divergent 2-graphs for the $\std=4$ Grosse-Wulkenhaar model. 

A first attempt to construct the Connes-Kreimer Hopf algebra of quartic non-com\-mutative field theory in $\std=4$ dimensions has been obtained already in \cite{Tanasa:2007xa} and further detailed in~\cite{Tanasa:2013kh}.
However, this earlier work is lacking the understanding that the external structure of 2-graphs are 1-graphs (here $\hd=2$ regular graphs, \ie polygons) and that it is in particular crucial to respect their number of connected components $\nc_{\br\tg}$.%
\footnote{In earlier work on non-commutative field theory the focus was on the concept of the number of ``broken faces'' $B$ lacking the understanding that this is actually the number of boundary components $\nc_{\br\tg}$. 
The issue becomes apparent with the definition of insertions in \cite{Tanasa:2013kh} which does not respect the boundary structure (see for example Fig.\,9 therein where a diagram with $\nc_{\br\tg}=2$ is inserted in a vertex $v$ with $\nc_{\gamma_v}=1$).}
The systematic approach here clarifies this issue and applies to renormalizable non-commutative field theory and matrix field theory of arbitrary interactions and corresponding to any dimension~$\std$.
\end{example}

\begin{example}[Hopf algebra of tensorial field theories]
\label{ex:tensorialfieldtheory}
The Feynman diagrams of complex tensorial field theories of rank $\rk\ge2$  \cite{BenGeloun:2014gp} are 2-graphs with $\rk$-valent edges, $\hd=\rk$, which can be represented as bipartite $(\rk+1)$-coloured graphs (up to multi-trace vertices, Ex.\,\ref{ex:colouredgraphs}). 
The interaction vertex graphs can be any $\rk$-coloured graphs.

A \emph{$\phi_{\gd,\rk}^n$ tensorial field theory} is specified by $\OG^e=\{\edgevgr\}$, \ie the propagator vertex graph with $\rk$ edges,
a set of bipartite $\rk$-coloured graphs $\OG^v$ with maximal number of vertices~$n$ and a dimension $d$.
The weights 
for the propagator are $\w|_{\OG^e} = \ks$, $\zeta>0$
\cite{BenGeloun:2014gp}.

Transforming the superficial degree of divergence $\sdd(\tg)=\gd\nf_\tg-\ks\nei_\tg+\sum_v\w(\og_v)$ into a form which is meaningful for the question of renormalizability is more involved than for matrices since the 2-graphs are now dual to simplicial pseudo manifolds of dimension $\rk$.
Generalizing the genus $g$, the relevant quantity for power counting turns out to be Gurau's degree $\gdeg$ of a coloured graph \cite{Gurau:2011he,Gurau:2012ek}.
The Gurau degree $\gdeg\in\N$ is not a topological invariant (except for $\rk=2$ where $\gdeg=g$) but induced from the genus of Heegaard splitting surfaces of the pseudo manifold \cite{Ryan:2012bg} (for $\rk=3$, and a generalized notion of splitting surfaces in $\rk\ge4$).
Since the boundary of an $(\rk+1)$-coloured graph is an $\rk$-coloured graph, and thus a pseudo manifold of dimension $\rk-1$, also the Gurau degree of the boundary plays a role. 
Applying the Euler formula \eqref{eq:Eulerformula} repeatedly for the various splitting surfaces and using the edge-vertex relation \eqref{eq:edgevertexrelation} one finds \cite{OusmaneSamary:2014bs,BenGeloun:2014gp}
\[
\sdd(\tg) 
= \frac{\uvd-\ks}{2}\bigg(\sum_{k} k \nv^{(k)}_\tg - \nv_{\br\tg} \bigg)
-\uvd(\nv_\tg-1)
-\gd\left(\frac{2 \gdeg_\tg-2\gdeg_{\br\tg}}{(\rk-1)!} + \nc_{\br\tg}-1\right)
+\sum_{v\in\OG^v}\w(\og_v)
\]
where $\uvd:=\gd(\rk-1)$ is thus the counterpart to the dimension of local field theory from the renormalization perspective, \ie $\phi^n_{\gd,\rk}$ tensorial field theory behaves similar to $\phi^n_{\std=\uvd}$ local field theory.
Note that the above example of matrices, $\rk=2$, is covered by this formula%
\footnote{
More precisely, the $\rk=2$ case is a complex matrix field theory, not Hermitian as in Ex.\,\ref{ex:matrixfieldtheory}. Still, it is known that the large-$N$ properties of complex and Hermitian matrix models agree \cite{Ambjorn:1990ib} and it is therefore no surprise that the same is true for renormalizability of their field theory versions.
}.

A vanishing Gurau degree gives the maximal contribution to the divergence degree.
One can prove that always $\gdeg_\tg\ge\gdeg_{\br\tg}$ \cite{BenGeloun:2013fw}. 
Thus, $-(\gdeg_\tg-\gdeg_{\br\tg})$ is maximal for $\gdeg_\tg=\gdeg_{\br\tg}=0$.
The question which $\phi^n_{\gd,\rk}$ tensorial theories are renormalizable is then as usual a question of how the maximal order of interactions $n$ is balanced by the dimension $\uvd$, with some more freedom given by the propagator weight $\ks$. 
The dependence on the number of 2-graph vertices vanishes,
\[
\sdd(\tg) 
= \uvd - \frac{\uvd-\ks}{2}\nv_{\br\tg}
-\gd\left(\frac{2 \gdeg_\tg-2\gdeg_{\br\tg}}{(\rk-1)!} + \nc_{\br\tg}-1\right)
\, ,\]
when setting the vertex weights for each vertex $v$ of degree $d^v$ to
\[
\w(\og_v) = \uvd - \frac{d^v}{2}(\uvd-\ks) \, .
\]
Thus, there can be super-renormalizable theories for $\uvd=\ks$ and otherwise the maximal degree for renormalizable interaction vertices is  $n=\lfloor\frac{2\uvd}{\uvd-\ks}\rfloor$ \cite{BenGeloun:2014gp}.
The theory is just renormalizable if the number in the floor bracket is integer.

The first example, 
and a particularly interesting one, is the BenGeloun-Rivasseau model \cite{BenGeloun:2013fw}, \ie $\phi^6_{1,4}$ tensorial field theory with $\zeta=1$.
Thus, it has divergence degree
\[
\sdd(\tg) 
= 6-\nv_{\br\tg} 
-\frac{1}{3}\left(\gdeg_\tg-\gdeg_{\br\tg}\right) - (\nc_{\br\tg}-1) \, .
\]
This means that among others also the 4-point Green's function with two boundary components needs renormalization. 
It is thus necessary to include the multi-trace vertex $\cvft$ in the set $\OG^v$ of interactions and all the subtleties discussed in Sec.\,\ref{sec:contraction} apply.
Furthermore, 
there are divergent diagrams with $\gdeg_\tg>0$.
Thus, there is some dependence on the bulk topology in this case.
Still, the set $\psd_\textsc{bgr}$ of divergent 2-graphs is contraction closed since one can prove that the Gurau degree is invariant under contractions (\cite{Raasakka:2013wa}, Lemma 4.1). 
Then one obtains the Connes-Kreimer Hopf algebra of divergent 2-graphs of the BenGeloun-Rivasseau model as
$\hfd_\textsc{bgr}
=\langle \psd_\textsc{bgr}\rangle$. 
In the same way this works also for a number of renormalizable tensorial theories with other rank $\rk$ and dimension $\gd$ \cite{BenGeloun:2014gp}.
\end{example}

\subsection*{Acknowledgments}
I thank M. Borinsky, J. Ben Geloun, A. Hock, D. Kreimer, A. Pithis and R. Wulkenhaar for discussions and comments on the paper, in particular M. Borinsky for very helpful discussions on the various subalgebra structures occuring in Hopf algebras of diagrams.
This work was funded by the Deutsche Forschungsgemeinschaft (DFG, German Research Foundation) in two ways,
primarily under the project number 418838388 and
furthermore under Germany's Excellence Strategy EXC 2044–390685587, Mathematics M\"unster: Dynamics–Geometry–Structure.

\bibliographystyle{JHEP}
\bibliography{algebra}

\end{document}